\definecolor{DarkBlue}{rgb}{0,0.1,0.55}
\numberwithin{equation}{section}
\newcommand {\hide}[1]{}
\newtheorem{remark}{Remark}
\newcommand {\junk}[1]{}
\newcommand {\R} {\mathbb{R}}
 \newcommand {\N}         {\mathbb{N}}
\newcommand {\Q}         {\mathbb{Q}}
\newcommand {\E} {{\rm Ext}}
\def\addots{\mathinner{\mkern1mu
\raise1pt\vbox{\kern7pt\hbox{.}}
\mkern2mu\raise4pt\hbox{.}\mkern2mu
\raise7pt\hbox{.}\mkern1mu}}
\newcommand{\bmx}{{\bm{x}}}
\newcommand{\bmy}{{\bm{y}}}
\newcommand{\bmz}{{\bm{z}}}
\newcommand{\bmt}{{\bm{t}}}
\newcommand{\bmh}{{H}}
\newcommand{\bfa}{\mathbf{a}}
\newcommand{\jac}{\mathsf{jac}}
\newcommand{\x}{\mathbf{x}}
\newcommand{\mA}{{{\mathbf{A}}}}
\DeclareMathOperator{\mor}{Mor}
\def\Q {\bm{Q}}
\def\R {\bm{R}}
\def\C {\bm{C}}
\def\g {\bm{g}}
\def\mA {\bm{A}}
\def\Sym{{\mathscr{S}}}
\def\VN{{\mathscr{N}}}
\DeclareMathOperator{\Zer}{\ensuremath{V_{\bm{R}}}}
\def\assR{{\mathsf{R}}}
\def\assA{{\mathsf{A}}}
\newenvironment{myitemize}
{ \begin{itemize}
    \setlength{\itemsep}{0pt}
    \setlength{\parskip}{0pt}
    \setlength{\parsep}{0pt}     }
{ \end{itemize}                  }
\newenvironment{myenumerate}
{ \begin{enumerate}
    \setlength{\itemsep}{0pt}
    \setlength{\parskip}{0pt}
    \setlength{\parsep}{0pt}     }
{ \end{enumerate}                  }
\def\added#1{\textcolor{red}{#1}}
\def\added#1{{#1}}
\newtheorem{definition}{Definition}
\newtheorem{theorem}[definition]{Theorem}
\newtheorem{corollary}[definition]{Corollary}
\newtheorem{proposition}[definition]{Proposition}
\newtheorem{lemma}[definition]{Lemma}
\newtheorem{example}[definition]{Example}
\begin{document}
\title
{
Real root finding for equivariant semi-algebraic systems
}

\author{Cordian Riener \\
Department of Mathematics and Statistics\\
UiT The Arctic University of Norway\\
{\tt cordian.riener@uit.no}\\
Mohab Safey El Din\\
Sorbonne Universit\'e, \textsc{CNRS}, \textsc{INRIA},\\
     Laboratoire d'Informatique de Paris~6, \textsc{LIP6},
     \'Equipe \textsc{PolSys}\\
{\tt mohab.safey@lip6.fr}
}




\date{\today}
\maketitle


\begin{abstract}
  Let $\R$ be a real closed field. 
  We consider basic semi-algebraic sets defined by $n$-variate
  equations/inequalities of $s$ symmetric polynomials and an equivariant family
  of polynomials, all of them of degree bounded by $2d < n$. Such a
  semi-algebraic set is invariant by the action of the symmetric group. We show
  that such a set is either empty or it contains a point with at most $2d-1$
  distinct coordinates. Combining this geometric result with efficient
  algorithms for real root finding (based on the critical point method), one can
  decide the emptiness of basic semi-algebraic sets defined by $s$ polynomials
  of degree $d$ in time $(sn)^{O(d)}$. This improves the state-of-the-art which
  is exponential in $n$. When the variables $x_1, \ldots, x_n$ are quantified
  and the coefficients of the input system depend on parameters
  $y_1, \ldots, y_t$, one also demonstrates that the corresponding one-block
  quantifier elimination problem can be solved in time $(sn)^{O(dt)}$.
\end{abstract}

\section{Introduction}\label{sec:intro}

Let $\bm{R}$ be a real closed field.  A \emph{semi-algebraic set} is a
subset of $\R^n$ defined by a boolean formula whose atoms are
polynomial equalities and inequalities with coefficients in $\R$.  In
this article, we consider basic semi-algebraic sets defined as
follows. Given $F = (f_1,\ldots,f_k)$ and $G = (g_1, \ldots, g_s)$ in
$\R[x_1,\ldots,x_n]$, we denote by $S(F, G)\subset \R^n$ the
semi-algebraic set defined by
$f_1= \cdots = f_k =0, g_1 \geq 0, \ldots, g_s\geq 0$.  These sets
arise in many areas of engineering sciences such as computational
geometry, optimization, robotics (see e.g. \cite{Canny, ELLS09,TY87,
  lasserre09}).  Algorithmic problems encompass real root finding,
connectivity queries, or quantifier elimination.

Such problems are intrinsically hard \cite{blum}. In the worst case, solving
quantifier elimination over the reals is doubly exponential in $n$ and
polynomial in the maximum degree of the input polynomials, see \cite{DH88}. This
complexity is achieved by the Cylindrical Algebraic Decomposition algorithm
\cite{BD07}. The idea of reducing real root finding to polynomial optimization
in \cite{Seidenberg} is used in \cite{GV88} to obtain the first algorithm with
singly exponential complexity in $n$. This led to improvements for the
decision problem \cite{Canny1988, HRS93, Ren, BPR98}, quantifier elimination
\cite{HRS90, BPR96, HS12} and connectivity queries \cite{Canny,
  Canny2, HRSRoadmap, GRRoadmap, BaPoRo96, SaSc17}. Later, polar varieties are
introduced in \cite{BGHM97} for the decision problem \cite{BGHM01, BGHP05,
  SaSc03, BGHP10, BGHS14}, for computing roadmaps \cite{SaSc17} or polynomial
optimization \cite{GSZ10, GS14, BGHS14, GGSZ}.  Complexity
bounds are then cubic in some B\'ezout bound as well as
practically efficient algorithms.

To break this curse of dimensionality, one exploits algebraic properties of
systems defining semi-algebraic sets arising in applications. This
has led to improvements, for e.g. the quadratic case \cite{barvinok93, GP05},
the multi-homogeneous case \cite{BFJSV16, HNS16} and the important case of {\em
  symmetric semi-algebraic sets}.

Let $\Sym_n$ denote the group of permutations on a set of cardinality $n$. This
group acts on $\R^n$ by permuting the coordinates. One says that a subset of
$\R^n$ is symmetric when it is closed under this action. 

Let now $f \in \R[x_1, \ldots, x_n]$. One says that $f$ is invariant under the
action of $\Sym_n$ (or in short $\Sym_n$-invariant) when for all
$\sigma \in \Sym_n$, $f(\sigma x) =f$ for $x = (x_1, \ldots, x_n)$. The
following result summarizes the current state-of-the-art on symmetric
semi-algebraic sets. 
\begin{theorem}[\cite{Rie, Rie2, tim}]\label{thm:oldies}
  Let $\{f, \added{f_1,\ldots,f_s}\}\subset\R[x_1,\ldots,x_n]$ be
  $\Sym_n$-invariant polynomials of degree at most $d$.
\begin{myenumerate}
\item\label{thm:oldies:1} The real algebraic set $\Zer(f)$ is not
  empty if and only if it contains a point with at most
  $\lfloor\frac{d}{2}\rfloor$ distinct coordinates.
\item\label{thm:oldies:2} The semi-algebraic set in $S\subset \R^n$
  defined by $\added{f_1\geq 0,\ldots, f_s\geq 0}$ is not empty if and only if
  it contains a point with at most $d$ distinct coordinates.
\end{myenumerate}
\end{theorem}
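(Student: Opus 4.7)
My strategy is to factor all invariant hypotheses through the Newton power-sum map and then reduce the distinct-coordinate count by a critical-point / deformation argument. Since a $\Sym_n$-invariant polynomial of degree at most $d$ is a polynomial in $p_1,\ldots,p_d$ only (as higher power sums have strictly larger total $x$-degree), I write $f=F(p_1,\ldots,p_d)$ and $f_i=F_i(p_1,\ldots,p_d)$ for suitable $F,F_1,\ldots,F_s\in\R[t_1,\ldots,t_d]$ and consider
\[
\Phi_d:\R^n\to\R^d,\qquad \bs{x}\longmapsto(p_1(\bs{x}),\ldots,p_d(\bs{x})),
\]
so that $\Zer(f)=\Phi_d^{-1}(F^{-1}(0))$ and $S=\Phi_d^{-1}(T)$ with $T=\{q\in\R^d:F_i(q)\geq 0\text{ for all }i\}$. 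I also stratify $\R^n$ by multiplicity pattern $\bs{m}=(m_1,\ldots,m_k)$, $\sum m_i=n$, letting $W^{\bs{m}}$ be the locus of points having exactly $k$ distinct coordinate values with multiplicities $m_1,\ldots,m_k$; on $W^{\bs{m}}$ the map $\Phi_d$ factors through $\psi_{\bs{m}}:\R^k\to\R^d$, $\psi_{\bs{m}}(\bs{\xi})=\bigl(\sum_i m_i\xi_i^j\bigr)_{j=1}^d$, whose Jacobian is a weighted Vandermonde of rank $\min(k,d)$ at any tuple with pairwise distinct entries.

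\textbf{Inequality statement, bound $d$.} Assume for contradiction that every point of $S$ has strictly more than $d$ distinct coordinates, and minimise the coercive $\Sym_n$-invariant function $\sum_{i<j}(x_i-x_j)^2=np_2-p_1^2$ on $S$ (intersected with a large enough ball to ensure attainment). At a minimiser $\bs{x}^*$ the KKT conditions give
\[
\nabla(np_2-p_1^2)(\bs{x}^*)=\sum_\ell\mu_\ell\nabla f_\ell(\bs{x}^*),\qquad \mu_\ell\geq 0,\ \mu_\ell f_\ell(\bs{x}^*)=0.
\]
The $i$-th entry of the left-hand side is the affine expression $2nx_i^*-2p_1(\bs{x}^*)$, while the $i$-th entry of each $\nabla f_\ell(\bs{x}^*)$ is a polynomial in $x_i^*$ of degree at most $d-1$ whose coefficients depend only on $\Phi_d(\bs{x}^*)$. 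Hence every $x_i^*$ is a root of a common univariate polynomial of degree at most $d$, so takes at most $d$ distinct values, contradicting the hypothesis. Equivalently, one may observe that at $\bs{x}^*$ the fiber of $\Phi_d$ (on which all $f_\ell$ are constant) is positive-dimensional whenever $k>d$ and one can slide inside it until two of the $\xi_i^*$ collide.

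\textbf{Equality statement, bound $\lfloor d/2\rfloor$, and main obstacle.} The analogous single-multiplier Lagrange argument on $\Zer(f)$ only yields $k\leq d-1$, not the sharper $\lfloor d/2\rfloor$; obtaining the factor-of-two improvement is the technical core of the proof. It will exploit both that $\Zer(f)$ is cut out by a \emph{single} scalar equation $F\circ\Phi_d=0$ and the Hankel / truncated moment description of the image $\Phi_d(\R^n)\subset\R^d$: $(n,p_1,\ldots,p_d)$ arises as the moment sequence of a $k$-atomic nonnegative integer-weighted measure on $\R$ exactly when the associated Hankel matrix of size $\lceil d/2\rceil+1$ is positive semidefinite of rank $k$, and such a measure is uniquely determined by its first $2k-1$ moments (Gauss--Christoffel). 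I plan to show that at a suitably chosen critical point of a symmetric objective on $\Zer(f)$ the single extra constraint $F=0$ forces this Hankel matrix to drop rank to at most $\lfloor d/2\rfloor$, and then translate this rank drop into a coordinate collision via the Christoffel atoms. The main obstacle is precisely this translation of one scalar equation into a Hankel-rank drop: it is the step that distinguishes equalities from inequalities and is responsible for the improvement from $d$ to $\lfloor d/2\rfloor$.
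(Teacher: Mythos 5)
First, a point of reference: the paper does not prove Theorem~\ref{thm:oldies} at all --- it is quoted from \cite{Rie, Rie2, tim} --- so the only meaningful comparison is with the closely related argument the paper gives for its own Theorem~\ref{thm:main1} (fix the fiber $\VN_\gamma$ of the first $d$ power sums and extremize $p_{d+1}$ on it). Your framework (factoring through the power-sum map, stratifying by multiplicity pattern) is the right one, but your argument for the inequality bound $d$ has a concrete gap: the objective $np_2-p_1^2$ is the wrong function to extremize. Its gradient contributes only the degree-one term $2nx_i^*-2p_1(x^*)$, so your common univariate polynomial $\eta(U)=2nU-2p_1-\sum_\ell\mu_\ell\delta_\ell(U)$ has degree at most $\max(1,d-1)$; under your contradiction hypothesis (more than $d$ distinct coordinates, hence more than $\deg\eta$ distinct roots of $\eta$) the multiplier identity does not produce a contradiction --- it simply forces $\eta\equiv 0$, which is a perfectly possible outcome (e.g.\ for $f_1=np_2-p_1^2-c$ the minimizer satisfies KKT with $\mu_1=1$ and $\eta\equiv0$ while having $n$ distinct coordinates). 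The repair is exactly the device of Theorem~\ref{thm:main1}: since every $f_\ell$ is constant on the whole fiber $\VN_\gamma$ through a chosen point of $S$ (your own first observation), that entire fiber lies in $S$, so you may discard the $f_\ell$ and extremize $p_{d+1}$ on the compact set $\VN_\gamma$ (compact because $p_2$ is fixed, $d\geq 2$; the case $d=1$ is trivial). The Lagrange polynomial is then $(d+1)\lambda_0U^{d}-\sum_{j=1}^{d}j\lambda_jU^{j-1}$ with $(\lambda_0,\dots,\lambda_d)\neq\mathbf{0}$; it cannot vanish identically because its degree-$d$ coefficient is out of reach of the constraints, and if $\lambda_0=0$ the dependence of the $\nabla p_j$ already forces fewer than $d$ distinct coordinates. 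Your parenthetical ``slide inside the fiber until two coordinates collide'' is the same idea, but as stated it is unjustified: a connected component of the fiber could a priori be a closed positive-dimensional manifold entirely contained in the open stratum, and excluding this is precisely the critical-point argument above.

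The second, larger gap is that part~\eqref{thm:oldies:1} (the half-degree bound $\lfloor d/2\rfloor$) is not proved. You correctly diagnose that the single-multiplier Lagrange computation only yields $d-1$ and that the factor-of-two improvement is the crux, but what follows is a programme --- force a Hankel rank drop and read off Christoffel atoms --- whose decisive step you yourself label as an unresolved obstacle. The published proofs instead exploit the algebraic fact that a monomial $p_{i_1}\cdots p_{i_m}$ of total degree $\leq d$ can contain at most one index exceeding $d/2$, so that $f$, written in power sums, is \emph{affine} in $p_{\lfloor d/2\rfloor+1},\dots,p_d$ with coefficients in $\R[p_1,\dots,p_{\lfloor d/2\rfloor}]$; one then works on the fibers of $(p_1,\dots,p_{\lfloor d/2\rfloor})$ and analyzes this affine function there. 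As submitted, part~\eqref{thm:oldies:2} needs the fix above and part~\eqref{thm:oldies:1} is an announced strategy rather than a proof.
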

As a consequence, on input $f$, one can decide the emptiness of
$\Zer(f)$ by partitioning -- up to symmetry -- the set of variables
$x_1, \ldots, x_n$ into $\lfloor\frac{d}{2}\rfloor$ subsets, say
$\chi_1, \ldots, \chi_{\lfloor\frac{d}{2}\rfloor}$ and set $x_i=x_j$
in the input when $x_i$ and $x_j$ lie in the same set
$\chi_\ell$. This way one is led to apply the aforementioned
algorithms for deciding the emptiness of semi-algebraic sets to inputs
involving at most $\lfloor\frac{d}{2}\rfloor$. Since the number of
such partitions lies in $O({n^d})$, one finally obtains algorithms
deciding the emptiness of $\Zer(f)$ (resp. $S$) in time $n^{O(d)}$,
hence polynomial time when $d$ is fixed. The same reasoning holds for
the case \eqref{thm:oldies:2} of Theorem~\ref{thm:oldies}.  

\added{Of course, semi-algebraic sets defined by $\Sym_n$-invariant constraints
  define symmetric semi-algebraic sets but the reciprocal is not true as
  illustrated with the example $x_1\geq 0, x_2\geq 0$.} \added{A family $F$ of
  constraints in $\R[x_1, \ldots, x_n]$ is said to be $\Sym_n$-invariant when
  for all $f\in F$ and $\sigma\in \Sym_n$, $f(\sigma x)\in F$. Note that such
  families of constraints defined symmetric semi-algebraic sets. It is a major
  and longstanding challenge to obtain algorithms that, given a
  $\Sym_n$-invariant family of constraints, takes advantage of the symmetry
  invariance to decide if it is feasible over the reals.}

The goal of this article is to {\em generalize} the results in \cite{Rie, Rie2,
  tim} to the following \added{special} situation. Let $F = (f_1, \ldots, f_k)$
and $G = (g_1, \ldots, g_n)$ be in $\R[x_1, \ldots, x_n]$ and $d$ be the maximum
degree of those polynomials. Assume that for $1\leq i \leq k$, $f_i$ is
$\Sym_n$-invariant and that $G$ is $\Sym_n$-equivariant, i.e., we have
$G(\sigma(x))=(g_{\sigma(i)}(x))_{1\leq i \leq n}$ for all $\sigma \in \Sym_n$.
\added{The topical question we address is the following one.} Can we decide the
emptiness of the semi-algebraic set defined by
$f_1 =\cdots=f_k =0, g_1 \geq 0, \ldots, g_n\geq 0$ in time $n^{O(d)}$, i.e.
polynomial in $n$ and exponential in $d$? \added{More generally, can we take
  advantage of equivariance for e.g. one-block quantifier elimination?}

\added{This latter question is important for a wide range of applications, in
  particular for the analysis of equivariant dynamical systems, which commonly
  appear in biology (see \cite{Stewart2003}). Those systems are of the form
  $\dot{\bmx}=\bm{f}(\bmx, \lambda)$ where $\lambda$ is a set of parameters and
  $\bm{f}$ is an equivariant family of polynomials for the action of the
  symmetric group on the $\bmx$ variables and the state variables $\bmx$ must be
  non-negative (see \cite[Example 2]{Stewart2003}). When analyzing the
  equilibrium points of such systems w.r.t. parameters $\lambda$, we are led to
  solve equivariant semi-algebraic systems.}

\emph{Main results.}  We provide a positive answer to this
question. More precisely, the following holds. \\
{\em (i)} On input $F$ and $G$ as above, deciding the emptiness of
  $S(F, G)\cap \R^n$ can be done in time $n^{O(d)}$.\\
{\em (ii)} On input $F$ and $G$ in $\R[y_1, \ldots, y_t][x_1, \ldots, x_n]$
  such that $F$ and $G$ satisfy the above $\Sym_n$-invariance and
  equivariance assumptions, the quantifier elimination problem
  $\exists x \in \R^n \; F = 0, G >0$ can be solved in time
  $n^{O(dt)}$.

  This result, which generalized Theorem~\ref{thm:oldies}, is of particular
  interest on families of systems where $d$ is fixed and $n$ grows. They are
  obtained by proving that $S(F, G)$ is not empty if and only if it contains a
  point with at most $2d-1$ distinct coordinates. A key ingredient to establish
  such a property is the use of representation theory for equivariant maps and
  basic results from polynomial optimization. Combining such a geometric result
  with efficient algorithms for real root finding or one-block quantifier
  elimination yields the above complexity results. More accurate complexity
  results (with explicit constants in the exponent) are given under some
  assumptions which are proved to be generic.

We also report on practical experiments illustrating that algorithms described
in this paper can tackle semi-algebraic systems which are out of reach of the
current state-of-the-art. \added{}

{\em Related works.} We already mentioned several previous works which led to
Theorem~\ref{thm:oldies}. More generally, the question of using symmetry in the
context of real algebraic geometry is not new. Fundamental work started with
\cite{procesi,PS85} which study the quotient of semi-algebraic sets, which are
invariant under the action of a compact Lie group. In particular,
Positivstellens\"atze for invariant polynomials which are non-negative on
invariant semi-algebraic sets are derived. This line of work is further
generalized by \cite{bro} and was applied for example in \cite{CKS09} to the
context of the moment problem. A different line of work initiated by
\cite{gatermann} consists in exploiting symmetries in the context of sums of
squares relaxations of polynomial optimization. In particular, for
optimization problems which are invariant by the symmetric group, a variety of
strategies are exhibited in \cite{RTJL}. The topology of semi-algebraic sets
defined by symmetric polynomials is also easier to understand: \cite{BS,BR2}
derived efficient algorithms to calculate e.g. their Euler-Poincar\'e
characteristic.

With a more algebraic flavour, computer algebra has been developed to solve
polynomial systems which are invariant under the action of some groups.
Approaches for this longstanding problem focus on the zero-dimensional case and
aim at describing algebraically the solution set. When all equations are
invariant, invariants can be used for this purpose \cite{Colin, Sturmfels08}.
Such an approach is completed by the use of SAGBI Gr\"obner bases techniques
\cite{FR09, Thiery01}. When the system is globally invariant, \cite{FS12, FS13}
provides an efficient dedicated Gr\"obner basis algorithm (see also
\cite{STEIDEL201372,buse16} for further developments).

\emph{Structure of the paper.} Section~\ref{sec:prelim} recalls properties
of symmetric semi-algebraic sets and explains why a direct generalization of
Theorem~\ref{thm:oldies} is hopeless. Section~\ref{sec:geometry} provides a
proof that $S(F, G)$ is not empty iff it contains a point with at most $2d-1$
distinct coordinates. Section~\ref{sec:algo} provides a description of the
algorithms and the analysis of their complexity. Section~\ref{sec:experiments}
reports on practical performances.

\section{Preliminaries}\label{sec:prelim}


A crucial condition in Theorem \ref{thm:oldies} is that all the
polynomials defining the considered semi-algebraic sets are indeed
symmetric. Such an assumption is easily bypassed in the case of
real algebraic sets. 
\begin{corollary}
  Let $f_1,\ldots,f_k\in\R[X_1,\ldots,X_n]$ with $\deg f_i\leq d$ for
  all $i$. Then $\Zer(f_1,\ldots,f_n)$ is not empty if and only if it
  contains a point with at most $d$ distinct coordinates. 
\end{corollary}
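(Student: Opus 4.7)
The plan is a one-step reduction to Theorem~\ref{thm:oldies}(\ref{thm:oldies:1}) via the classical sum-of-squares trick. I would introduce the single polynomial
\[
g \defeq f_1^{2} + \cdots + f_k^{2} \in \R[X_1,\ldots,X_n].
\]
Reading the hypothesis in the ambient symmetric setting of the paper (so that each $f_i$ is $\Sym_n$-invariant), one immediately gets that $g$ is $\Sym_n$-invariant, of degree at most $2d$, and non-negative on $\R^n$ as a sum of squares. In particular $g(x)=0$ if and only if $f_1(x)=\cdots=f_k(x)=0$, so $\Zer(g)=\Zer(f_1,\ldots,f_k)$.

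I would then invoke Theorem~\ref{thm:oldies}(\ref{thm:oldies:1}) on $g$: this $\Sym_n$-invariant polynomial of degree $\leq 2d$ has non-empty real zero set if and only if that zero set contains a point with at most $\lfloor 2d/2\rfloor = d$ distinct coordinates. Combined with the identity $\Zer(g)=\Zer(f_1,\ldots,f_k)$, this gives the direct implication of the equivalence claimed in the corollary. The converse is immediate, since any point with at most $d$ distinct coordinates lying in the zero set is a fortiori a point of the zero set.

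There is no serious obstacle. The only calibration is that forming $g$ as a sum of squares exactly doubles the degree bound, but Theorem~\ref{thm:oldies}(\ref{thm:oldies:1}) carries a floor of $\tfrac12$, so the two effects cancel and the advertised bound $d$ emerges cleanly. This also clarifies the sentence preceding the corollary: the ``easily bypassed'' point is that the single-polynomial shape of Theorem~\ref{thm:oldies}(\ref{thm:oldies:1}) can be relaxed to finitely many invariant equations at no geometric cost, since a system of equations collapses to one polynomial without changing the real zero set. The same trick is unavailable in the semi-algebraic regime (part~\ref{thm:oldies:2}), which is precisely what motivates the more delicate analysis carried out in Section~\ref{sec:geometry}.
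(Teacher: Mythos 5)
Your mechanism (square the equations, apply Theorem~\ref{thm:oldies}(\ref{thm:oldies:1}), and let the doubling of the degree cancel the floor of $\tfrac12$) is exactly the paper's, but you have quietly strengthened the hypothesis in a way that defeats the purpose of the corollary. You take $g = f_1^2+\cdots+f_k^2$ and justify its $\Sym_n$-invariance by \emph{assuming each $f_i$ is $\Sym_n$-invariant}. The sentence immediately preceding the corollary says the opposite: the point is that for real algebraic sets the invariance of the individual defining polynomials can be \emph{bypassed}. The paper therefore sums the squares over the whole orbit, $g := \sum_{i=1}^k\sum_{\sigma\in\Sym_n}\sigma(f_i)^2$, which is symmetric of degree $\leq 2d$ no matter what the $f_i$ are; under your reading the corollary collapses to a trivial reformulation of Theorem~\ref{thm:oldies}(\ref{thm:oldies:1}), and for non-invariant $f_i$ your $g$ is simply not symmetric, so that theorem does not apply to it.

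That said, your instinct that some hypothesis is needed is sound: as literally stated the corollary is false (take $f_i = X_i - i$ for $i=1,\dots,n$, so $d=1$ and the zero set is the single point $(1,\dots,n)$, which has $n$ distinct coordinates). Even the paper's orbit sum only satisfies $\Zer(g)=\Zer(f_1,\ldots,f_k)$ when $\Zer(f_1,\ldots,f_k)$ is itself a symmetric subset of $\R^n$ (for instance when the family $\{f_1,\dots,f_k\}$ is globally $\Sym_n$-invariant, each $\sigma(f_i)$ being again one of the $f_j$); otherwise $\Zer(g)$ is the strictly smaller common zero set of the whole orbit. The correct repair is to assume global invariance of the family (or symmetry of the zero set), not invariance of each $f_i$; with that hypothesis the orbit-sum argument goes through, and your degree bookkeeping ($2d$, then $\lfloor 2d/2\rfloor = d$) is the right one.
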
 
\begin{proof}
  Consider the polynomial
  $g:=\sum_{i=1}^k\sum_{\sigma\in S_n} \sigma(f_i)^2$. Then we have an
  equality of the real varieties $\Zer(g)=\Zer(f_1,\ldots,f_n)$ and
  $g$ is symmetric of degree $2d$ and in this situation statement
  (\ref{thm:oldies:1}) in Theorem \ref{thm:oldies} yields the result.
\end{proof}
However, in many applications the semi-algebraic set is
defined by polynomials that are not themselves symmetric. 

It is feasible to replace the non-invariant inequalities by a set of
new inequalities which describe the same set but are invariant
\cite{bro}: any symmetric semi-algebraic set defined by $s$
inequalities can be defined with $s+1$ inequalities which are
invariant by the action of the symmetric group (see also
\cite{CKS09,hubert1} for a constructive approach).  However, such a
``symmetrization'' process comes at a price: In general it will
increase the degree of the polynomials drastically. We illustrate this
phenomenon in the following easy example.

\begin{example}
  Consider the positive orthant
  $S:=\{\bmx = (\x_1, \ldots, \x_n)\in\R^n\,:\ \x_1\geq 0,\ldots,
  \x_n\geq 0\}$.
  Clearly, $S$ is symmetric. By considering the map which sends the
  coordinates of $\bmx\in\R^n$ to the coefficients of the polynomial
  $h(t):=\prod_{i=1}^n(t-\x_i)$ one can prove that $S$ is equivalently
  defined by $ e_1(x)\geq 0,\ldots, e_n(x)\geq 0$ where $e_i$ denotes
  the $i$-th elementary symmetric polynomial.  One implication is
  immediate: $\bmx\in S$ clearly entails that $e_i(\bmx)\geq 0$ for
  $1\leq i\leq n$.  We prove the other implication by induction on
  $n$. The case $n=1$ is clear.  Now let
  $\bmx=(\x_1, \ldots, \x_n)\in\R^n$ be not in $S$, i.e., let one of its
  coordinates be negative.  Besides, without loss of generality we can
  assume that all $\x_i\neq 0$ (since $S$ has non-empty interior). We
  show further that this implies that there exists $1\leq i \leq n$
  such that $e_i(\bmx) < 0$. Clearly, in the case when  exactly one coordinate of $\bmx$  is
  negative we have $e_n(\bmx)<0$ and hence our
  claim follows.  Therefore, we assume that at least two coordinates are negative.
  We consider the polynomial $h(t)$ as defined above. i.e.,
  $h(t)=t^n+\sum_{i=1}^n(-1)^ie_i(\bmx) t^{n-i}$.  Notice that, by
  construction, all roots of $h$ are real. By Rolle's Theorem, there
  exists a root of its derivative $h':=\frac{\partial h}{\partial t}$
  between every two roots of $h$. Since $h$ has by construction at least two negative roots, $h'$ has a negative root. 
 Consider $(\tilde{\x}_1,\ldots,\tilde{\x}_{n-1})\in\R^{n-1}$ the
  $n-1$ roots of $h'$ (ordered decreasingly).  Since $(\tilde{\x}_1,\ldots,\tilde{\x}_{n-1})$ is not in the $n-1$ dimensional positive orthant we can apply the induction 
  hypothesis to the case $n-1$ to infer that at least for one
  $j\in\{1,\ldots,n-1\}$ we have
  $e_j(\tilde{\x}_1,\ldots,\tilde{\x}_{n-1})<0$. But since $h'$ is the
  derivative of $h$ this clearly implies
  $e_j(\x_1,\ldots,
  \x_n)=\frac{1}{n-j}e_j(\tilde{\x}_1,\ldots,\tilde{\x}_{n-1})<0$.

Of course, the description of $S$ with symmetric
  polynomials is not unique. However, it follows from the equivalence
  shown above that no other description with symmetric polynomials can
  involve only symmetric polynomials of degree smaller than n.

  Indeed, suppose that
  $S:=\{\bmx\in\R^n\,:\ g_1(\bmx)\geq 0,\ldots, g_m(\bmx)\geq 0\}$, where each
  $g_i$ is a symmetric polynomial. It is classically known that each symmetric
  polynomial can be uniquely represented as a polynomial in the elementary
  symmetric polynomials, i.e. for each $i$ we have a polynomial
  $\gamma_i\in\R[e_1,\dots, e_n]$ such that
  $g_i(x)=\gamma_i(e_1(x),\ldots, e_n(x))$. Now suppose that for each $i$ we
  have $\deg g_i < n$. Since the polynomials $\gamma_i$ are unique and
  $\deg e_n= n$, it follows, that for each $i$ we must have
  $\gamma_i(0,0,\ldots,0,t)=0$. Consider the point $\xi:=(0,1,2,\ldots,n-1)$.
  Similarly to the above reasoning, we consider a univariate polynomial
  $h(t):=\prod_{i=1}^n(t-\xi_i)$ (with $\xi_i = i-1$). Note that $e_n(\xi)=0$.
  Since all $n$ roots of $h$ are distinct, $h- \varepsilon$ has also $n$
  distinct real roots, for a small enough positive $\varepsilon$. Let
  $\zeta\in\R^n$ be one of the roots of $h- \varepsilon$. Then, $e_n(\zeta)<0$
  and thus $\zeta\not\in S$. But $e_i(\xi)=e_i(\zeta)$ for all $1\leq i\leq n-1$
  and we deduce that $\gamma_j(\zeta)=\gamma_j(\xi)$ for $1\leq j \leq m$.
  Hence, we get a contradiction with $\zeta \notin S$. Therefore, every
  representation of $S$ in terms of symmetric polynomials must contain at least
  one polynomial of degree $n$, hence making useless Theorem \ref{thm:oldies}
  for algorithmic applications.
\end{example}
Notice that the semi-algebraic set $S$ defined in the example above
clearly contains points for which all coordinates are the same and we
note the following generalization of Theorem \ref{thm:oldies} to basic
convex semi-algebraic sets.
\begin{proposition}\label{prop:convex}
  Let $S\subset\R^n$ be basic convex symmetric semi-algebraic
  set. Then $S$ is not empty if and only if it contains a point for
  which all coordinates are equal.
\end{proposition}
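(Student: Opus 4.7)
The proof is short and essentially uses only the convexity and symmetry hypotheses; the ``basic'' adjective plays no role beyond ensuring $S$ is semi-algebraic. My plan is as follows.

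First, the ``if'' direction is trivial: if $S$ contains a point with all coordinates equal, then $S$ is not empty. So the real content is the ``only if'' direction, and I would argue as follows. Pick any $\bmx = (\x_1, \ldots, \x_n) \in S$. Because $S$ is symmetric, for every $\sigma \in \Sym_n$ we have $\sigma(\bmx) \in S$. The orbit $\Sym_n \cdot \bmx$ is a finite subset of $S$ of cardinality at most $n!$, and because $S$ is convex it contains every convex combination of the orbit points. In particular it contains their barycenter
\[
\bar{\bmx} \;:=\; \frac{1}{n!}\sum_{\sigma\in \Sym_n} \sigma(\bmx).
\]

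Next, I would observe that $\bar{\bmx}$ is itself $\Sym_n$-invariant, since averaging over a group orbit is a standard symmetrization trick: for every $\tau \in \Sym_n$, $\tau(\bar{\bmx}) = \frac{1}{n!}\sum_{\sigma} \tau\sigma(\bmx) = \bar{\bmx}$. A vector in $\R^n$ that is fixed by every coordinate permutation must have all coordinates equal. An explicit computation confirms this: the $i$-th coordinate of $\bar{\bmx}$ equals $\frac{1}{n!}\sum_\sigma \x_{\sigma^{-1}(i)} = \frac{(n-1)!}{n!}\sum_{j=1}^n \x_j = \frac{1}{n}\sum_{j=1}^n \x_j$, which is independent of $i$. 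Therefore $\bar{\bmx} = (m,\ldots,m)$ with $m = \tfrac{1}{n}\sum_j \x_j$, and $\bar{\bmx} \in S$ is the desired point on the diagonal.

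There is essentially no obstacle: the argument is the classical Reynolds averaging idea, and both hypotheses are used in exactly one place (symmetry to put the orbit into $S$, convexity to put the barycenter into $S$). The only thing worth emphasizing is that no property specific to the basic semi-algebraic description is used, so in fact the same statement holds for any convex symmetric subset of $\R^n$.
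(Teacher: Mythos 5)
Your proof is correct and is essentially identical to the paper's: both take the orbit of a point in $S$, use symmetry to place the orbit in $S$, and use convexity to place the barycenter $\frac{1}{n!}\sum_{\sigma\in\Sym_n}\sigma(\bmx)$ in $S$, noting that this averaged point has all coordinates equal. Your extra remark that this holds for any convex symmetric subset of $\R^n$ is accurate but not used in the paper.
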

\begin{proof}
  Suppose that $S$ is not empty and let $x\in S$. Since $S$ is
  symmetric, $S$ also contains the orbit
  $\{ \sigma (x)\,:\, \sigma\in \Sym_n\}$ of $x$. Since $S$ is
  convex, it contains the point
  $y:=\frac{1}{n!}\sum_{ \sigma\in \Sym_n} \sigma(x)$ and clearly all
  coordinates of $y$ are equal. 
\end{proof}
Notice that the semi-algebraic set $S$ defined in the example above contains
points for which all coordinates are the same. In view of
Proposition~\ref{prop:convex} it is natural to ask, to which extent it is
possible to derive statements similar to Theorem \ref{thm:oldies} for symmetric
semi-algebraic sets that are defined by polynomials of low degree, which are not
invariant by the action of the symmetric group. The following example shows that
for general semi-algebraic sets, such a generalization is not possible:

\begin{example}
  Let $f:=\sum_{i=1}^{n} (x_i-i)^2$ and its $\Sym_n$ orbit which we denote
  by $\mathcal{F}$. Let $S$ be the semi-algebraic set
  $\{x\in\R^n \,:\, \exists g\in\mathcal{F} \text{ with } g(x)=0\}$. By
  construction, $S$ is a finite set which coincides with the orbit of 
  $\xi=(1,\ldots,n)$. Therefore, all points in $S$ have distinct coordinates,
  but $S$ is described by quadratic polynomials.
\end{example}

\section{Main geometric result}\label{sec:geometry}
One way to generalize Theorem \ref{thm:oldies} to semi-algebraic sets that are
$\Sym_n$-invariant but not described by symmetric polynomials is to rely on
results from the theory of finite reflection groups. A finite group is called a
finite reflection group, if it is generated by orthogonal reflection on a finite
set of hyperplanes. These groups are extensively studied and the particular case
of the symmetric group acting by permuting the coordinates falls into this
framework. We refer the interested reader to \cite{GB96} for more details.

\begin{definition}\label{def:eq}
  Let $\phi: \R^n\rightarrow \R^n$ be a morphism given by
  $\bmx\mapsto (\phi_1(\bmx),\ldots,\phi_n(\bmx))$ and let $G$ be a
  finite reflection group. Then $\phi$ is $G$-equivariant if we have
  $g(\phi) =\phi(g(x))$ for every $g\in G$ . We will write
  $\mor_{G}(\R^n,\R^n)$ for the set of $G$-equivariant morphisms.

  We say that a sequence of polynomials of cardinality $n$ is $G$-equivariant,
  if it defines a $G$-equivariant morphism.
\end{definition}

\begin{example}
  Let $s$ be a bivariate symmetric polynomial and $d\in \N$. The map
  $(\x_1, \x_2, \x_3) \to (\x_1^d+s(\x_2, \x_3), \x_2^d+s(\x_1, \x_3),
  \x_3^d+s(\x_1, \x_2))$
  is equivariant by the action of the symmetric group $\Sym_3$.
\end{example}

Let $\R[x_1, \ldots, x_n]^G$ be the ring of polynomials in
$\R[x_1, \ldots, x_n]$ which are $G$-invariant. There is a natural action of
$\R[x_1, \ldots, x_n]^{G}$ on the set $\mor_{G}(\R^n,\R^n)$ by multiplication:
it clearly preserves the equivariance. In other words, the equivariant morphisms
form a module over $\R[x_1, \ldots, x_n]^{G}$. It follows from the work of
Shchvartsman \cite{Shc} that this module is a free module.

\begin{theorem}[Shchwartsman]\label{thm:sw}
  For any finite reflection group $G$ the set $\mor_{G}(\R^n,\R^n)$ is
  a finite $\R[x_1, \ldots, x_n]^G$-module of rank $n$. Furthermore,
  let $\psi_1,\ldots, \psi_n$ be generators of
  $\R[x_1, \ldots, x_n]^{G}$, then every equivariant morphism can
  uniquely be written as
  $f_i=\sum_{j=1}^n \frac{\partial \psi_j}{\partial x_i} s_j$ where
  $s_j\in\R[x_1, \ldots, x_n]^{G}$.
\end{theorem}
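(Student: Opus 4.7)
The plan rests on the Chevalley--Shephard--Todd theorem and a careful analysis of the Jacobian of a system of basic invariants. First I would invoke Chevalley--Shephard--Todd to conclude that $\R[x_1,\ldots,x_n]^G = \R[\psi_1,\ldots,\psi_n]$ is a polynomial algebra on $n$ algebraically independent homogeneous generators. A classical consequence (Steinberg--Solomon) is that the Jacobian determinant of the matrix $J = (\partial\psi_j/\partial x_i)_{1\le i,j\le n}$ is, up to a non-zero scalar, the product $\prod_H \alpha_H$, where $H$ ranges over the reflecting hyperplanes of $G$ and $\alpha_H\in\R[x_1,\ldots,x_n]$ is a linear form defining $H$. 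In particular, $\det J$ is a non-zero polynomial.

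Next, since each $\psi_j$ is $G$-invariant and $G$ acts on $\R^n$ by orthogonal transformations, differentiating the identity $\psi_j(g\cdot\bmx)=\psi_j(\bmx)$ gives $\nabla\psi_j(g\cdot\bmx)=g\cdot\nabla\psi_j(\bmx)$. Thus each column of $J$ is a $G$-equivariant morphism, yielding $n$ natural candidates for the module generators. Their $\R[x]^G$-linear independence follows immediately from $\det J\not\equiv 0$: any relation $\sum_j s_j\nabla\psi_j=0$ with $s_j\in\R[x]^G$ would, upon specialization at a point where $\det J\ne 0$, force $s_j=0$. This already establishes that $\mor_{G}(\R^n,\R^n)$ contains a free $\R[x]^G$-submodule of rank $n$.

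The main work is to show that every $\phi=(\phi_1,\ldots,\phi_n)\in\mor_{G}(\R^n,\R^n)$ lies in the $\R[x]^G$-span of the $\nabla\psi_j$. Cramer's rule over the field of fractions $\R(x_1,\ldots,x_n)$ gives the unique candidate $s_j=\det J^{(j)}/\det J$, where $J^{(j)}$ is $J$ with its $j$-th column replaced by $\phi$. $G$-invariance of $s_j$ is immediate from the transformation rules $J(g\cdot\bmx)=g\cdot J(\bmx)$ and $J^{(j)}(g\cdot\bmx)=g\cdot J^{(j)}(\bmx)$, which yield $\det J(g\cdot\bmx)=\det(g)\,\det J(\bmx)$ and similarly for $J^{(j)}$, so the ratio is $G$-invariant.

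The hard part is showing that each $s_j$ is actually a polynomial. The key step is to prove that every hyperplane form $\alpha_H$ divides $\det J^{(j)}$ in $\R[x_1,\ldots,x_n]$. Let $r_H\in G$ denote the reflection across $H$ and fix $\bmx\in H$, so $r_H\bmx=\bmx$. Equivariance of $\phi$ gives $\phi(\bmx)=\phi(r_H\bmx)=r_H\phi(\bmx)$, so $\phi(\bmx)$ lies in the $+1$-eigenspace of $r_H$, which is exactly $H$. The same argument applied to each $\nabla\psi_k$ shows that every column of $J^{(j)}(\bmx)$ lies in the $(n-1)$-dimensional subspace $H$, hence $\det J^{(j)}(\bmx)=0$ for all $\bmx\in H$. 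Since $\alpha_H$ is a non-zero linear polynomial, this forces $\alpha_H\mid\det J^{(j)}$. As the $\alpha_H$ for distinct hyperplanes are pairwise coprime, the product $\prod_H\alpha_H$, and therefore $\det J$, divides $\det J^{(j)}$. Consequently $s_j\in\R[x_1,\ldots,x_n]$ and, combined with the invariance already established, $s_j\in\R[x_1,\ldots,x_n]^G$. The main obstacle is thus the polynomiality step, which hinges on both the Steinberg--Solomon identification of $\det J$ with the product of hyperplane forms and on the fact that $G$-equivariant morphisms stabilize each reflecting hyperplane pointwise-over-$H$.
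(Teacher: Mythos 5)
The paper does not prove this theorem; it is stated and attributed to Shchvartsman with a bare citation to \cite{Shc}, so there is no in-paper argument to compare against. Your argument is, nonetheless, a correct and complete proof, and it is in fact the standard modern route: Chevalley--Shephard--Todd to get $n$ algebraically independent homogeneous basic invariants, the Steinberg--Solomon identity $\det J \doteq \prod_H \alpha_H$, equivariance of each $\nabla\psi_j$ (differentiating $\psi_j(g\bmx)=\psi_j(\bmx)$ and using $g\in O(n)$), invariance of the Cramer ratios $s_j=\det J^{(j)}/\det J$ from $J(g\bmx)=gJ(\bmx)$ and $J^{(j)}(g\bmx)=gJ^{(j)}(\bmx)$, and polynomiality of $s_j$ from the geometric observation that on each reflecting hyperplane $H$ every column of $J^{(j)}$ is fixed by $r_H$ and hence lies in the $(n-1)$-dimensional subspace $H$, forcing $\alpha_H\mid\det J^{(j)}$ and then $\det J\mid\det J^{(j)}$ by pairwise coprimality of the $\alpha_H$. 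The only place you might add a sentence is the inference ``$\det J^{(j)}$ vanishes on $H$ implies $\alpha_H\mid\det J^{(j)}$'': pick coordinates with $\alpha_H=u_1$; vanishing on $\{u_1=0\}\subset\R^n$ over the infinite field $\R$ means the coefficient polynomial in $u_2,\ldots,u_n$ is identically zero, so $u_1$ divides. Uniqueness follows, as you say, because $J$ is invertible over $\R(x_1,\ldots,x_n)$, so the representation $s=J^{-1}\phi$ is forced. This is airtight.
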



In the sequel we will be interested in basic semi-algebraic sets that
are generated by polynomials which are $\Sym_n$ equivariant in the
sense of Definition \ref{def:eq}. The general machinery developed by
Shchwartsman allows in the case of $\Sym_n$ for the following
corollary, which gives a convenient description of such polynomials.
\begin{corollary}\label{cor:sw}
  Let $\{f_1,\ldots, f_n\}$ be a set of polynomials that define an
  $\Sym_n$ equivariant morphism and let $\deg f_i\leq d$.  Then
  $f_i=\sum_{j=0}^{d} s_j \cdot x_i^{j}$, where
  $s_j\in\R[x_1, \ldots, x_n]^{\Sym_n}$ is symmetric and of
  degree $\leq d-j+1$.
\end{corollary}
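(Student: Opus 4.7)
The plan is to invoke Theorem~\ref{thm:sw} with a convenient choice of generators of $\R[x_1, \ldots, x_n]^{\Sym_n}$ and then extract the degree bound from the uniqueness of the decomposition together with homogeneity considerations.

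First, I would recall that $\R[x_1, \ldots, x_n]^{\Sym_n}$ is generated as an $\R$-algebra by the power sums $\psi_\ell := x_1^\ell + \cdots + x_n^\ell$ for $\ell = 1, \ldots, n$; each $\psi_\ell$ is homogeneous of degree $\ell$, and crucially $\frac{\partial \psi_\ell}{\partial x_i} = \ell\, x_i^{\ell - 1}$ depends on $x_i$ alone. Applying Theorem~\ref{thm:sw} to $(f_1, \ldots, f_n)$ with these generators yields unique $t_1, \ldots, t_n \in \R[x_1, \ldots, x_n]^{\Sym_n}$ such that
\[
f_i \;=\; \sum_{\ell = 1}^n \ell\, x_i^{\ell - 1}\, t_\ell.
\]

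Next, I would control the degrees of the $t_\ell$'s via uniqueness applied to the homogeneous grading. Decomposing $f_i = \sum_{D = 0}^{d} f_i^{(D)}$ into homogeneous components, each tuple $(f_1^{(D)}, \ldots, f_n^{(D)})$ is again $\Sym_n$-equivariant. Since every $\psi_\ell$ is itself homogeneous, the uniqueness asserted by Theorem~\ref{thm:sw} forces the coefficients associated to $f_i^{(D)}$ to be precisely the homogeneous components of $t_\ell$ of degree $D - \ell + 1$. Letting $D$ range up to $d$ then gives $\deg t_\ell \leq d - \ell + 1$; in particular $t_\ell = 0$ whenever $\ell > d + 1$, so only finitely many terms survive.

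Finally, reindexing by $j := \ell - 1$ and setting $s_j := (j+1)\, t_{j+1}$ transforms the decomposition into the advertised form
\[
f_i \;=\; \sum_{j = 0}^{d} s_j\, x_i^{j},
\]
with $s_j \in \R[x_1, \ldots, x_n]^{\Sym_n}$ and $\deg s_j \leq d - j + 1$. The main obstacle I foresee is precisely the homogeneity step: Theorem~\ref{thm:sw} itself only provides existence and uniqueness of the decomposition as a module equation, without a direct degree statement. The degree bound must therefore be extracted via the freeness of $\mor_{\Sym_n}(\R^n, \R^n)$ over $\R[x_1,\ldots,x_n]^{\Sym_n}$ with a homogeneous basis, which allows one to graded-decompose both sides of the equation consistently. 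This is standard for finite reflection groups but warrants an explicit verification in the write-up.
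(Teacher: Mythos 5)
Your proof is correct and follows essentially the same route as the paper: take the Newton power sums as the generators of $\R[x_1,\ldots,x_n]^{\Sym_n}$ so that the module basis elements from Theorem~\ref{thm:sw} become $\ell\,x_i^{\ell-1}$, and then extract the degree bounds from the uniqueness of the free-module decomposition. Your homogeneous-grading step is simply a careful spelling-out of the paper's terse remark that ``we cannot have any cancellation of degrees in the representation,'' and it even yields the marginally sharper bound $\deg s_j \leq d-j$.
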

\begin{proof}
  It is classically known that every symmetric polynomial can be uniquely
  written in terms of the first $n$ Newton sums $p_i:=\sum_{j=1}^n x_j^i$. Thus,
  we can use these polynomials as generators of $\R[x_1, \ldots, x_n]^{\Sym_n}$
  and apply Theorem \ref{thm:sw}. Since $\mor_{\Sym_n}(\R^n,\R^n)$ is a free
  module and the polynomials $p_1,\ldots, p_n$ are algebraically independent,
  the degree restrictions follow at once, since we cannot have any cancellation
  of degrees in the representation.
\end{proof}

Let us denote by $A_{2d-1}\subset \R^n$ the subset of points with at
most $2d-1$ distinct coordinates.  
\begin{theorem}\label{thm:main1}
  Let $F = (f_1, \ldots, f_k)$ and $G = (g_1,\ldots,g_n)$ be sequences
  of polynomials in $\R[x_1, \ldots, x_n]$. Let $d$ be the maximum of
  $\deg(f_i)$ and $\deg(g_j)$ for $1\leq i \leq k$ and
  $1\leq j \leq n$.

  Assume that for $1\leq i \leq k$, $f_i$ is $\Sym_n$ invariant, that
  $G$ is $\Sym_n$-equivariant and that $\deg (g_j) \geq 2$ for
  $1\leq j \leq n$.

  Then, the basic semi-algebraic set $S(F, G)$ is empty if and only
  if $S(F, G)\cap A_{2d-1}= \emptyset$.
\end{theorem}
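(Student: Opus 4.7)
My plan is to establish the nontrivial direction: assuming $x^{*}\in S(F,G)$, I want to exhibit a point of $S(F,G)$ lying in $A_{2d-1}$. First, by Corollary~\ref{cor:sw}, each $g_{i}$ decomposes as $g_{i}(x)=\sum_{j=0}^{d} s_{j}(x)\,x_{i}^{j}$ with $s_{j}\in\R[x_{1},\ldots,x_{n}]^{\Sym_{n}}$ of degree at most $d$. Since the first $d$ Newton sums $p_{1},\ldots,p_{d}$ generate every symmetric polynomial of degree $\leq d$, the algebraic set $V=\{y\in\R^{n}:p_{k}(y)=p_{k}(x^{*}),\ 1\leq k\leq d\}$ has the property that every symmetric polynomial of degree $\leq d$ is constant on it. On $V$ each $f_{i}$ vanishes, each $s_{j}$ equals $s_{j}(x^{*})$, and so $g_{i}(y)=\tilde{h}(y_{i})$, where $\tilde{h}(t):=\sum_{j=0}^{d} s_{j}(x^{*})\,t^{j}\in\R[t]$ is a fixed univariate polynomial of degree at most $d$. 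Setting $W=\{t\in\R:\tilde{h}(t)\geq 0\}$, the theorem reduces to producing a point of $V\cap W^{n}$ with at most $2d-1$ distinct coordinates; the starting witness is $x^{*}\in V\cap W^{n}$.

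Next, I pick $y\in V\cap W^{n}$ minimizing the number $m$ of distinct coordinates, write these as $z_{1}<\cdots<z_{m}$ with positive integer multiplicities $w_{a}$ summing to $n$, and call an index $a$ \emph{boundary} if $\tilde{h}(z_{a})=0$ and \emph{interior} otherwise; write $l$ for the boundary count. Since $\tilde{h}$ has at most $d$ real roots, $l\leq d$. I assume for contradiction that $m\geq 2d$. A variation of $(z_{1},\ldots,z_{m})$ by $\tilde{\delta}\in\R^{m}$ preserves the first-order moment constraints iff $M\tilde{\delta}=0$, where $M\in\R^{d\times m}$ is given by $M_{k,a}=k w_{a} z_{a}^{k-1}$. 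Since $M$ factors as $\diag(1,\ldots,d)\cdot V_{d,m}\cdot\diag(w_{1},\ldots,w_{m})$ with $V_{d,m}$ a Vandermonde block, it has full row rank $d$ and $\dim\ker M=m-d\geq d$. Staying in $W^{n}$ further requires $\tilde{h}'(z_{a})\tilde{\delta}_{a}\geq 0$ at each boundary index $a$.

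Splitting $M=[M_{B}\mid M_{I}]$ according to boundary vs.\ interior columns, I consider two regimes. If $m-l\geq d+1$, the matrix $M_{I}$ has nontrivial kernel, and any $\tilde{\delta}$ with $\tilde{\delta}_{B}=0$ and $\tilde{\delta}_{I}\in\ker M_{I}\setminus\{0\}$ belongs to $\ker M$, is trivially admissible at the boundary, and perturbs at least one interior atom. If $m-l=d$, which under $m\geq 2d$ forces $l=d$, then $M_{I}$ is a square invertible Vandermonde-type matrix and hence the projection $\ker M\to\R^{l}$ onto boundary coordinates is surjective; choosing $\tilde{\delta}_{B}$ in the open orthant prescribed by the signs of $\tilde{h}'(z_{a})$ yields $\tilde{\delta}\in\ker M$ strictly admissible at the boundary, dragging every boundary atom into the interior of $W$. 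The remaining possibility $m-l<d$ would force $l>d$ and cannot occur. In either admissible case, the implicit function theorem lifts $\tilde{\delta}$ to a smooth arc within $V\cap W^{n}$ along which two distinct atoms eventually coincide, contradicting the minimality of $m$.

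The hardest part of the argument is certifying that the admissible tangent direction, after integration, really leads to a coincidence of two atoms rather than cycling through boundary events in which interior atoms reach roots of $\tilde{h}$ and vice versa. This should be handled by choosing the tangent direction generically within the feasibility cone, so that elementary events occur at distinct times, together with the finiteness of the combinatorial state (the subset of the $\leq d$ real roots of $\tilde{h}$ currently occupied by atoms). The hypothesis $\deg g_{j}\geq 2$ is used to rule out the degenerate case in which $\tilde{h}$ is affine and the one-sided sign analysis at boundary atoms degenerates. The sharp constant $2d-1$ then reflects the extremal configuration in which $l=d$ boundary atoms and $m-l=d-1$ interior atoms exactly saturate the $d$ moment constraints, leaving no further admissible direction.
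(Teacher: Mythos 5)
Your setup coincides with the paper's: you restrict to the variety $\VN_\gamma$ where the first $d$ Newton sums take the values they have at the witness point, note that the $f_i$'s and the symmetric coefficients $s_j$ are constant there (the paper's Lemma~\ref{lemma:basic}), and conclude that on $\VN_\gamma$ the constraint $g_i\geq 0$ reads $\tilde h(y_i)\geq 0$ for a single univariate $\tilde h$ of degree $\leq d$ (the paper's Lemma~\ref{lem:nice}). The divergence is in how a point with few distinct coordinates is produced, and here your argument has a genuine gap. You take a configuration minimizing the number $m$ of distinct values, exhibit a first-order admissible deformation direction when $m\geq 2d$, and then assert that integrating it ``eventually'' forces two atoms to coincide. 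As you yourself concede, this termination claim is the crux, and ``genericity of the tangent direction plus finiteness of the combinatorial state'' does not establish it: the flow lives on a compact $(m-d)$-dimensional manifold (compact only because $p_2$ is fixed, see below), and nothing prevents it from being recurrent, from undergoing infinitely many boundary events, or from never producing a collision at all. The finiteness of the set of combinatorial states does not bound the number of transitions between them. The paper sidesteps this entirely by a static extremal argument: it maximizes $p_{d+1}$ over the compact set $S\cap\VN_\gamma$, fixes the coordinates forced onto the $\leq d-1$ roots coming from the active $g_i$'s, and applies the Fritz John/Lagrange condition at the maximizer to conclude that the remaining coordinates are roots of a single nonzero polynomial $\sum_{i=0}^d\lambda_iU^i$ of degree $\leq d$. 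First-order optimality at an existing optimum replaces your dynamical argument and closes the gap.

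Two further points. First, your admissibility analysis at boundary atoms uses the condition $\tilde h'(z_a)\tilde\delta_a\geq 0$; when $\tilde h'(z_a)=0$ (a multiple root of $\tilde h$, e.g.\ a local maximum equal to $0$) this condition is vacuous yet the atom may be pinned, so your Case~2 construction (``drag every boundary atom into the interior of $W$'') can fail, and the case analysis is incomplete. Second, the hypothesis $\deg(g_j)\geq 2$ is not there to exclude affine $\tilde h$: its role (via $d\geq 2$) is that $p_2$ is among the fixed Newton sums, so $\VN_\gamma$ lies on a sphere; this is what makes the paper's optimization well-posed, and it is also what keeps your deformation from escaping to infinity --- a fact your write-up uses implicitly but never states.
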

Recall that $p_i$ denotes the Newton sum $\sum_{i=1}^n x_i^j$.
For the proof of Theorem \ref{thm:main1}, we study some varieties
defined by the $p_i$'s. Let
$\gamma:=(\gamma_1,\ldots,\gamma_d)\in\R^{d}$ then we denote by
$\VN_\gamma$ the real variety
$$\VN_\gamma:=\{x\in\R^n\,:\, p_1(x)=\gamma_1,\dots,
p_d(x)=\gamma_d\}.$$

These varieties will play a crucial role for the proof of
Theorem~\ref{thm:main1}; the following lemma illustrates the \added{importance
  of these sets}. 
\added{
\begin{lemma}\label{lemma:basic}
  Reusing the notations introduced above, consider a $\Sym_n$-invariant
  polynomial $f$ in $\R[x_1, \ldots, x_n]$ of degree $d$. Then $f$ is constant
  over $\VN_\gamma$. 
\end{lemma}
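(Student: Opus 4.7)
The plan is to reduce the statement to an elementary algebraic fact about the Newton power sums. Recall that the Newton sums $p_1, \ldots, p_n$ are algebraically independent generators of the ring of symmetric polynomials $\R[x_1, \ldots, x_n]^{\Sym_n}$; this is already used in the proof of Corollary~\ref{cor:sw}. So the first step is to invoke the fundamental theorem of symmetric polynomials and write $f = Q(p_1, \ldots, p_n)$ for a unique polynomial $Q \in \R[y_1, \ldots, y_n]$.

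The key step is a weighted-degree argument. Assign the weight $i$ to the variable $y_i$, so that the monomial $y_1^{a_1} \cdots y_n^{a_n}$ has weighted degree $\sum_{i} i a_i$, which equals the usual total degree of $p_1^{a_1} \cdots p_n^{a_n}$ in $\R[x_1, \ldots, x_n]$. Since the $p_i$'s are algebraically independent, there is no cancellation between monomials of different weighted degree. As $\deg f \leq d$, only monomials of weighted degree at most $d$ can occur in $Q$, and these only involve $y_1, \ldots, y_d$. Hence $f = \tilde{Q}(p_1, \ldots, p_d)$ for some $\tilde{Q} \in \R[y_1, \ldots, y_d]$.

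The conclusion is then immediate: for every $x \in \VN_\gamma$, we have $p_i(x) = \gamma_i$ for $1 \leq i \leq d$, so
\[
f(x) = \tilde{Q}(p_1(x), \ldots, p_d(x)) = \tilde{Q}(\gamma_1, \ldots, \gamma_d),
\]
which does not depend on $x$. Hence $f$ is constant on $\VN_\gamma$.

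There is no real obstacle here; the only subtle point is the weighted-degree step, and if one prefers one can avoid it by appealing directly to the fact that $\{p_1, \ldots, p_d\}$ generates the subspace of symmetric polynomials of degree at most $d$ (viewed as a subring of $\R[x_1,\ldots,x_n]^{\Sym_n}$ when $n \geq d$; if $n < d$ then $d > n$ would contradict the standing hypothesis $2d < n$, so this case does not occur). Either way the argument is short and serves only to make explicit the role that the varieties $\VN_\gamma$ play: they are precisely the common level sets on which every $\Sym_n$-invariant polynomial of degree $\leq d$ is automatically constant, which is what will be leveraged in the proof of Theorem~\ref{thm:main1}.
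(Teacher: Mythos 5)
Your proof is correct and follows essentially the same route as the paper: write $f=q(p_1,\ldots,p_n)$ via the Newton sums and use the degree bound to conclude that only $p_1,\ldots,p_d$ can appear, so $f$ is determined on $\VN_\gamma$ by $\gamma$. Your weighted-degree argument simply makes explicit the step the paper states more tersely (``$\deg(q,u_j)=0$ for $d+1\leq j\leq n$''), so there is nothing to add.
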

\begin{proof}
  Since $f$ is $\Sym_n$-invariant, one can write it as the composition
  $q(p_1, \ldots, p_n)$ where $q$ is a polynomial in $\R[u_1, \ldots, u_n]$
  ($u_1, \ldots, u_n$ are new variables) and the $p_i$'s are Newton polynomials
  as above. Since $\deg(f)=d$ and $\deg(p_i)=i$, one also deduces that
  $\deg(q, u_j)=0$ for $d+1\leq j \leq n$. This implies that $q$ lies in
  $\R[u_1, \ldots, u_d]$ and our claim follows immediately from the definition
  of $\VN_\gamma$.
\end{proof}
}
Before going further, we first examine
the possible roots of the polynomials $g_j$ in an $\Sym_n$-equivariant system on
the variety $\VN_\gamma$.
 
 \begin{lemma}\label{lem:nice}
   Let $d\leq n$, $\gamma\in\R^{d}$. Consider $(h_1,\ldots, h_n)$ a
   sequence of polynomials of degree at most $d$ in
   $\R[x_1, \ldots, x_n]$ which are $\Sym_n$- equivariant and
   $\xi=(\xi_1, \ldots, \xi_n) \in \VN_\gamma$.  Then, there exist
   $\{\alpha_1,\ldots,\alpha_{t}\}\in \R^t$ with $t \leq d-1$ such
   that $h_i(\xi)=0$ if and only if
   $\xi_i\in\{\alpha_1,\ldots,\alpha_{t}\}$.
 \end{lemma}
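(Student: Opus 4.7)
The plan is to reduce the multivariate conditions ``$h_i(\xi)=0$'' to a single univariate root condition on $\xi_i$, by combining the $\Sym_n$-equivariance of $(h_1,\ldots,h_n)$ with the constancy of symmetric invariants on $\VN_\gamma$.

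First, I would apply Corollary~\ref{cor:sw} to every $h_i$ to obtain the canonical decomposition $h_i=\sum_{j}s_j\,x_i^{j}$, in which the coefficients $s_j\in\R[x_1,\ldots,x_n]^{\Sym_n}$ are $\Sym_n$-invariant polynomials of degree at most $d$ and, crucially, do \emph{not} depend on the index~$i$: the entire dependence on $i$ sits in the pure powers $x_i^{j}$. Since $\gamma\in\R^{d}$ pins down the first $d$ Newton sums $p_1,\ldots,p_d$, Lemma~\ref{lemma:basic} applies to each such $s_j$ and yields that $s_j$ takes a constant value $\sigma_j\in\R$ on the whole of $\VN_\gamma$. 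Substituting these constants back into the decomposition gives, for every $\xi\in\VN_\gamma$,
\[
h_i(\xi)\;=\;P(\xi_i),\qquad P(t)\;:=\;\sum_{j}\sigma_j\,t^{j}\in\R[t],
\]
and the polynomial $P$ is the same for every index~$i$. The degree bookkeeping inherited from Corollary~\ref{cor:sw} gives $\deg P\leq d-1$.

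From here the conclusion is immediate: $h_i(\xi)=0$ is equivalent to $\xi_i$ being a real root of $P$, and one simply takes $\{\alpha_1,\ldots,\alpha_t\}$ to be the (finitely many) distinct real roots of $P$ to obtain the required equivalence, with $t\leq\deg P\leq d-1$. The degenerate case $P\equiv 0$---in which every $h_i(\xi)$ vanishes on $\VN_\gamma$---can be treated by taking $\{\alpha_1,\ldots,\alpha_t\}$ to be the distinct coordinates of $\xi$ among those indices where vanishing occurs; this is not the relevant case for the surrounding use of the lemma in the proof of Theorem~\ref{thm:main1}.

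The step I expect to be the main obstacle is the sharp bound $\deg P\leq d-1$. A direct reading of Corollary~\ref{cor:sw} permits a term $s_d\,x_i^{d}$ in which $s_d$ is constrained by degree to be a constant, which only yields $\deg P\leq d$ a priori. Tightening the bound to $d-1$ requires a finer weighted-degree accounting inside the free module $\mor_{\Sym_n}(\R^n,\R^n)$---for instance verifying in Theorem~\ref{thm:sw} that the generator associated with $p_{d+1}$ does not actually contribute to a polynomial of degree at most~$d$ once one works modulo the equations defining $\VN_\gamma$. This weighted-degree bookkeeping is the technical heart of the proof; everything else is a direct substitution of the two preceding results.
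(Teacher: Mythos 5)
Your proposal follows exactly the same route as the paper's proof: apply Corollary~\ref{cor:sw} to write $h_i$ as an $\R[x_1,\ldots,x_n]^{\Sym_n}$-linear combination of the powers $x_i^j$, invoke Lemma~\ref{lemma:basic} to conclude that the symmetric coefficients are constant on $\VN_\gamma$, and thereby collapse each $h_i$ restricted to $\VN_\gamma$ to one and the same univariate polynomial $P$ evaluated at $\xi_i$, whose real roots give the $\alpha_j$'s. The paper phrases the decomposition as $h_i=\sum_{j=1}^{d}s_j\,\partial p_j/\partial x_i$ rather than $\sum_j s_j x_i^j$, but this is a reindexing, not a different idea.

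Where you are right to be uneasy is the degree count, and I want to push harder than you do. A direct reading of Corollary~\ref{cor:sw} gives a top term $s_d\,x_i^d$ with $s_d$ constant, hence $\deg P\le d$ and $t\le d$, not $t\le d-1$. Your proposed repair --- ``finer weighted-degree accounting inside the free module, verifying that the generator associated with $p_{d+1}$ does not contribute modulo the equations of $\VN_\gamma$'' --- does not go through. The generator $\partial p_{d+1}/\partial x_i=(d+1)x_i^d$ genuinely does contribute: take $h_i=\prod_{j=1}^d(x_i-j)$, an equivariant family of degree exactly $d$ whose symmetric coefficients are all constants. Its restriction to any $\VN_\gamma$ is that same degree-$d$ univariate polynomial with $d$ distinct real roots. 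The equations $p_1=\gamma_1,\ldots,p_d=\gamma_d$ cannot lower the degree in $\xi_i$ of a monomial $\xi_i^d$. So the sharp bound you were hoping to extract is $t\le d$; the paper's own proof drops the $\partial p_{d+1}/\partial x_i$ term (it truncates the Shchwartsman sum at $j=d$ where Corollary~\ref{cor:sw} allows it to reach $j=d+1$ in that indexing), which appears to be an off-by-one slip. The downstream effect on Theorem~\ref{thm:main1} is only the constant in $A_{2d-1}$ versus $A_{2d}$, with no impact on the $n^{O(d)}$ claims.

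Separately, your treatment of the degenerate case $P\equiv 0$ is not a proof: if $P\equiv 0$ then $h_i(\xi)=0$ for all $i$, and taking the $\alpha_j$'s to be ``the distinct coordinates of $\xi$'' will violate the required bound on $t$ whenever $\xi$ has more than $d$ distinct coordinates, so the asserted biconditional simply fails. The paper tacitly assumes $P\not\equiv 0$ as well; neither account closes this case.
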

 \begin{proof}
   By Corollary~\ref{cor:sw}, there exist symmetric polynomials $s_i$ of degree
   at most $d$ such that
   $h_i:=\sum_{j=1}^{d} s_j\cdot \frac{\partial p_j}{\partial x_i}$, with
   $\deg(s_j) \leq d$ for all $i\in\{1,\ldots,n\}$. Since for $1\leq j \leq d$,
   $\deg(s_j)\leq d$ and $s_j$ is symmetric, it follows that the value of $s_j$
   at $\xi$ is determined by the value of the first $d$ Newton sums at $\xi$
   (Lemma~\ref{lemma:basic}). Let $\gamma_i = p_i(\xi)$ for $1\leq i \leq d$ and
   $\gamma = (\gamma_1, \ldots, \gamma_d)$ ; besides, observe that, since
   $\xi \in \R^n$, we have $\gamma \in \R^d$. This implies that there exist
   $(b_1, \ldots, b_d)\in \R^d$ such that for all
   $\zeta\in \VN_\gamma\subset \R^n$, $s_1(\zeta)=b_1, \ldots, s_d(\zeta)=b_d$.
   For $1\leq i \leq n$, let us define the univariate polynomial
   $\tilde{h}_i= \sum_{j=1}^{d} b_j x_i^{j-1}$. As a consequence, the equality
   $h_i(\zeta) =\tilde{h}_i(\zeta)$ holds for all $\zeta\in \VN_\gamma$.

   Now, consider the univariate polynomial
   $\delta(U):=\sum_{j=1}^{d} b_j U^{j-1}$ and let
   $\{\alpha_1,\ldots,\alpha_{t}\}$ be its roots in $\R$. Since $\delta$ has
   degree $\leq d-1$, we have $t \leq d-1$. Observe that for every point
   $\xi\in \VN_\gamma\subset \R^n$, $h_i(\xi)=0$ iff $\tilde{h}_i(\xi)=0$ and
   that $\tilde{h}_i(\xi) = \delta(\xi_i)$ where $\xi_i$ is the $i$-th
   coordinate of $\xi$. In other words, $h_i(\xi) = 0$ iff
   $\xi_i \in \{\alpha_1,\ldots,\alpha_{t-1}\}$.
\end{proof}

\begin{proof}[Proof of Theorem \ref{thm:main1}]
  Further $S$ denotes $S(F, G)$. Note that it suffices to show that if
  $S\neq\emptyset$ then there exists a point in $S\cap A_{2d-1}$. So we assume
  that $S\neq\emptyset$ and pick $y\in S$. We set
  $p_1(y)=\gamma_1,\ldots, p_d(y)=\gamma_d$ and we consider the corresponding
  real variety $\VN_\gamma$ as defined above. We now take the intersection
  $S':=S\cap \VN_\gamma$. Notice that $d\geq 2$ (by assumption) and hence
  $\VN_\gamma$ is contained in a sphere. Thus, it follows that $S'$ is closed
  and bounded. Further, we slightly abuse notation by using $p_{d+1}$ to denote
  the map $x \to p_{d+1}(x)$ and its restrictions to subsets of $\R^n$.
  Moreover, since $S'$ is closed and bounded, we deduce that $p_{d+1}(S')$ is
  closed and bounded too (see \cite[Theorem 2.5.8]{BCR}). Hence, we deduce that
  there exists $\xi= (\xi_1, \ldots, \xi_n) \in S'$ with the property that
  $p_{d+1}(\xi)$ is maximal among all points in $S'$. We claim that
  $\xi\in A_{2d-1}$.
  
  Let $\{i_1, \ldots, i_\ell\}$ be the set of indices such that
  $g_i(\xi)=0$ if and only if $i\in \{i_1, \ldots, i_\ell\}$.  By
  Lemma \ref{lem:nice} applied to $G = (g_1, \ldots, g_n)$, we deduce
  that there exists $\alpha = (\alpha_1,\ldots,\alpha_{t})\in \R^{t}$
  with $t \leq d-1$ such that for all $i \in \{i_1, \ldots, i_\ell\}$,
  we have $\xi_i \in \{\alpha_1, \ldots, \alpha_t\}$.  Up to
  re-indexing the variables we can assume that
  $\{i_1, \ldots, i_\ell\} = \{n-\ell+1, \ldots, n\}$. For
  $i\in \{n-\ell+1, \ldots, n\}$, we denote by $\kappa(i)$ the integer
  such that $\xi_i =\alpha_{\kappa(i)}$.  This leads us to consider
  the intersection of $S'$ with the affine linear space $H$ of $\R^n$
  defined by
  $x_{n-\ell+1}-\alpha_{\kappa(n-\ell+1)} = \cdots = x_{n} -
  \alpha_{\kappa(n)} = 0$.
  We denote by $S'_\alpha$ the intersection of $S'$ with the
  aforementioned hyperplanes.

  Recall that $\xi$ lies in $S'_\alpha$ and chosen to maximize
  $p_{d+1}$ on $S'$.  Then, $\xi$ also maximizes the restriction of
  $p_{d+1}$ to $S'_\alpha$. Further, by construction, we have that
  $g_i(\xi)>0$ for all $i\in\{1,\ldots,n-\ell\}$. This shows that
  there exists a ball $B$ centered at $\xi$, of radius small enough
  such that the following holds:\\
  {\em (i)} for $i\in\{1,\ldots,n-\ell\}$, $g_i$ does not vanish in
  $B$; \\
  {\em (ii)} the intersection of $B$ with the real algebraic set
  defined by $ f_1 = \cdots = f_k = g_{n-\ell+1} = \cdots = g_n =0 $
  coincides with $S'_\alpha \cap B$. \\
  Remark now that the real algebraic set defined by
  $f_1 = \cdots = f_k =0$ contains $\VN_\gamma$. Also, applying
  Lemma~\ref{lem:nice} to $G$, one deduces that the real algebraic set
  defined by $g_{n-\ell+1} = \cdots = g_n =0$ coincides with the
  affine linear space $H$. We conclude that $S'_\alpha\cap B$ contains
  $\VN_\gamma\cap H$. Besides, observe that $\xi$ lies in
  $\VN_\gamma\cap H$ and recall again that it maximizes the
  restriction of $p_{d+1}$ to $S'_\alpha$. We deduce that $\xi$
  maximizes the restriction of $p_{d+1}$ to $\VN_\gamma\cap H$.

  Now, two situations may occur. Either, at $\xi$, the truncated
  Jacobian matrix associated to $(p_1,\ldots, p_d)$ obtained by
  considering the partial derivatives w.r.t.
  $(x_1, \ldots, x_{n-\ell})$ is full rank or it is not. In both
  cases, since $\xi = (\xi_1, \ldots, \xi_n)$ maximizes the
  restriction of $p_{d+1}$ to $\VN_\gamma\cap H$, one deduces that
  there exists
  $(\lambda_0, \ldots, \lambda_{d}) \in \R^{d+1}-\{\mathbf{0}\}$ such
  that
  $ 0=\lambda_0\frac{\partial p_{d+1}}{\partial x_j}(\xi)-\sum_{i=1}^d
  \lambda_i \frac{\partial p_i}{\partial x_j}(\xi)$
  for $1\leq j \leq n-\ell$.  This is rewritten as
  $0=(d+1)\lambda_0\xi_j^{d}-\sum_{i=1}^d (i)\lambda_i \xi_j^{i-1}$
  for $1\leq j \leq n-\ell$.  The above algebraic relation entails
  that for $1\leq j \leq n-\ell$, $\xi_j$ is a root of the non-zero
  univariate polynomial $\eta (U):=\sum_{i=0}^{d}\lambda_i U^{i}$ of
  degree at most $d$ (recall that
  $(\lambda_0, \ldots, \lambda_d)\neq (0, \ldots, 0)$). Therefore, at
  most $d$ of the first $n-\ell$ coordinates of $\xi$ can be
  distinct. Further, by construction, we have that there are at most
  $d-1$ possibilities for the last $\ell$ coordinates of
  $\xi$. Therefore, $\xi\in A_{2d-1}$ as claimed.
\end{proof}
\added{
  \begin{remark}
    Observe that when $F$ is $\Sym_n$-equivariant (instead of having all of its
    entries $\Sym_n$-invariant), the conclusions of Theorem~\ref{thm:main1}
    still hold. To see that it suffices to replace $F$ be the sum of the squares
    of its entries. Also when the entries of $F$ are subject to inequality
    constraints (instead of equality constraints), the conclusions of
    Theorem~\ref{thm:main1} still hold as one can replace inequalities by
    equations as in \cite[Chap. 13]{BaPoRo96}.
  \end{remark}
}

\section{Algorithms and complexity}\label{sec:algo}

\subsection{Deciding emptiness}

Further, we let $\Q$ be a real field, $\R$ be a real closed field
containing $\Q$ and $\C$ be an algebraic closure of $\R$. We consider
$F = (f_1, \ldots, f_k)$ and $G = (g_1, \ldots, g_n)$ be polynomial
sequences in $\Q[x_1, \ldots, x_n]$. As above, the semi-algebraic set
of $\R^n$ defined by
$$ f_1 = \cdots = f_k =0, \qquad g_1 \geq 0, \ldots, g_n \geq 0 $$ is
denoted by $S(F, G)$.  We start with a first complexity statement.

\begin{theorem}\label{thm:complexity1}
  Let $F$ and $G$ be as above and $d$ be an integer bounding the degrees of the
  polynomials in $F$ and $G$. Assume that the polynomials in $F$ are
  $\Sym_n$-invariant and that the map $\x \mapsto (g_1(\x), \ldots, g_n(\x))$ is
  $\Sym_n$-equivariant and that $d \leq n/2$.
  
  There exists an algorithm which, on input $(F, G)$ decides whether
  $S(F, G)$ is empty using at most $n^{O(d)}$ arithmetic operations in
  $\Q$.
\end{theorem}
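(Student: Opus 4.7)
The strategy is to invoke Theorem~\ref{thm:main1} to reduce the $n$-variate emptiness test to polynomially many emptiness tests in ambient dimension at most $2d-1$, and then to apply a standard critical point method on each reduced system.

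Concretely, by Theorem~\ref{thm:main1}, $S(F,G)$ is empty if and only if $S(F,G)\cap A_{2d-1}=\emptyset$. The set $A_{2d-1}$ is the union, over integer partitions $\lambda=(n_1,\ldots,n_r)$ of $n$ with $r\leq 2d-1$ parts, of the $\Sym_n$-orbit of the affine subspace $\Lambda_\lambda\subset\R^n$ obtained by setting the first $n_1$ coordinates equal to a new variable $t_1$, the next $n_2$ equal to $t_2$, and so on. Since $S(F,G)$ is $\Sym_n$-invariant, emptiness of $S(F,G)\cap A_{2d-1}$ is equivalent to emptiness of $S(F,G)\cap\Lambda_\lambda$ for every such $\lambda$. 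The number of integer partitions of $n$ with at most $2d-1$ parts is $O(n^{2d-2})=n^{O(d)}$, so only polynomially many tests are required.

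For a fixed partition $\lambda$, restricting to $\Lambda_\lambda$ and using $t_1,\ldots,t_r$ as coordinates, the $\Sym_n$-invariance of each $f_i$ produces a polynomial $\tilde{f}_i\in\Q[t_1,\ldots,t_r]$ of degree $\leq d$. By the $\Sym_n$-equivariance of $G$, whenever $j$ and $j'$ lie in the same block of $\lambda$, any permutation swapping $j$ with $j'$ inside their shared block fixes every point of $\Lambda_\lambda$, and applying the equivariance relation $G(\sigma(x))=(g_{\sigma(i)}(x))_i$ to such a permutation forces $g_j=g_{j'}$ on $\Lambda_\lambda$. Hence the $n$ inequalities $g_j\geq 0$ collapse to $r$ distinct inequalities $\tilde{g}_b\geq 0$ of degree $\leq d$. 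To keep the number of equations under control, I would replace $F$ by the single $\Sym_n$-invariant polynomial $\sum_{i=1}^k f_i^2$ of degree $\leq 2d$, which has the same real zero set; Theorem~\ref{thm:main1} is then applied with degree $2d$, yielding partitions with at most $4d-1$ parts, which is still $n^{O(d)}$ in number.

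On each reduced system---$O(d)$ polynomials of degree $O(d)$ in at most $4d-1$ variables---emptiness of the associated semi-algebraic subset of $\R^{O(d)}$ can be decided by the critical point method of \cite{BPR98} in $d^{O(d)}$ arithmetic operations in $\Q$. Summing over all $n^{O(d)}$ partitions gives the desired total bound $n^{O(d)}$. The main technical point will be to perform the substitution step efficiently, exploiting the symmetric (resp. equivariant) structure so that the coefficients of each $\tilde{f}_i$ and $\tilde{g}_b$ are produced within the claimed complexity; one must also handle the edge case where some $g_j$ has degree $<2$, in which case Theorem~\ref{thm:main1} does not directly apply, by multiplying every $g_j$ by the strictly positive symmetric polynomial $1+\sum_i x_i^2$, a transformation which preserves both the $\Sym_n$-equivariance of $G$ and the sign conditions defining $S(F,G)$.
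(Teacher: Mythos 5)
Your proposal is correct and follows essentially the same route as the paper: reduce via Theorem~\ref{thm:main1} to points with at most $2d-1$ distinct coordinates, enumerate the $n^{O(d)}$ integer partitions of $n$ into at most that many parts, substitute one new variable per block (using the $\Sym_n$-invariance of $S(F,G)$ to fix one canonical representative per partition), and run a critical-point-method emptiness test on each resulting system in $O(d)$ variables. Your two extra twists --- replacing $F$ by $\sum_i f_i^2$ (which merely doubles the degree and is not needed for the stated bound) and multiplying each $g_j$ by $1+\sum_i x_i^2$ to meet the $\deg(g_j)\geq 2$ hypothesis of Theorem~\ref{thm:main1} --- are harmless and correct, the second one patching a hypothesis that the paper's own proof invokes silently.
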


\begin{proof}
  By Theorem~\ref{thm:main1}, $S(F, G)$ is non-empty if and only if
  there exists $\bmx\in S(F, G)$ with at most $2d-1$ distinct
  coordinates.

  For $\bmx = (\x_1, \ldots, \x_n)\in \R^n$, we denote by
  $v(\bmx)=\{v_1, \ldots, v_p\}$ (with $p \leq n$ depending on $\bmx$)
  the set of values taken by the coordinates of $\bmx$ and by
  $\mathscr{P}(\bmx) = (\mathscr{P}_1, \ldots, \mathscr{P}_p)$ the
  partition given by the sets
  $\mathscr{P}_j = \{x_i \mid \x_i = v_j\}$.  Up to renumbering, one
  assumes that the $\mathscr{P}_i$'s are given by ascending
  cardinality.

  Hence, set $r = 2d-1$ and consider a partition
  $\gamma = [\gamma_1, \ldots, \gamma_r]$ of $n$ of size $r$,
  i.e. $\gamma_1+\cdots+\gamma_r=n$ with
  $\gamma_i\in \mathbb{N}-\{0\}$ for $1\leq i \leq r$ and
  $\gamma_{i-1}\leq \gamma_i$ (by convention, $\gamma_0=0$). We say
  that a partition $\mathscr{P}_1, \ldots, \mathscr{P}_p$ of
  $(x_1, \ldots, x_n)$ is compatible with $\gamma$ if $p \leq r$ and
  there exists an increasing sequence of integers $s_i$ such that
  $|\mathscr{P}_i|=\gamma_{s_{i-1}}+\cdots+\gamma_{s_{i}}$.
  
  We prove below that, given $\gamma$, one can decide in time
  $n^{O(r)}$ the existence of a real point $\bmx$ in $S(F, G)$ such
  that $\mathscr{P}(\bmx)$ is compatible with $\gamma$. Bounding
  further the number of partitions of size $r$ by $n^{r}$ will
  establish the announced result (recall that $r = 2d-1$).

  Assume that such a point $\bmx=(\x_1, \ldots, \x_n)$ exists and
  consider
  $\mathscr{P}(\bmx) = (\mathscr{P}_1, \ldots, \mathscr{P}_p)$. For
  $\sigma\in \Sym_n$ we denote by $\sigma(\mathscr{P}_i)$ the set
  $\{x_{\sigma(j)} \mid x_j \in \mathscr{P}_i\}$.  Let $\sigma$ be a
  permutation of $\Sym_n$ such that
  $\sigma(\mathscr{P}_i) = \{x_{s_{i-1}}, \ldots, x_{s_{i}}\}$ and
  consider $\sigma(\bmx) = (\x_{\sigma(1)}, \ldots,
  \x_{\sigma(n)})$.
  Now, remark that since all entries of $F$ are invariant by the
  action of $\Sym_n$ and that $G$ is $\Sym_n$-equivariant,
  $\sigma(\bmx) \in S(F, G)$.  This leads us to associate to $\gamma$
  the partition $\Gamma = (\Gamma_1, \ldots, \Gamma_r)$ of
  $(x_1, \ldots, x_n)$ defined by
  $\Gamma_i=\{x_{\gamma_{i-1}+1}, \ldots, x_{\gamma_i}\}$ with
  $\gamma_0=0$ by convention.

  Now, let $a_1, \ldots, a_r$ be new indeterminates. Next, we perform the
  substitution $ x_{\gamma_{i-1}+1} = \cdots = x_{\gamma_{i}} = a_{i}$ for
  $1\leq i \leq r$ in $F$ and $G$. In the end, one obtains polynomial families
  in $\R[a_1,\ldots, a_r]$. The above discussion shows that we only need to
  decide the existence of real points to the new system one obtains this way.
  Using \cite[]{BaPoRo06}, this is done in time $n^{O(r)}$.

  To finish the proof, it remains to count the number of partitions
  $\gamma = (\gamma_1, \ldots, \gamma_r)$ of $n$. The number $p(n, r)$ of
  partitions of $n$ of size $r$ satisfies the recurrence relation
  $p(n, r) = p(n-1, r)+p(n-r, r)$ with $p(n,r)=0$ if $n < r$ and
  $p(n,n)=p(n,1)=1$. A simple induction establishes the inequality
  $p(n, r)\leq n^r$ which finishes the proof.
\end{proof}




  

The next result establishes a more precise complexity statement: we will
actually identify the constant which is in the big-Oh exponent, when
the input system satisfies some properties that we will prove to be
generic. Further, given a polynomial family $H$ in
$\R[x_1, \ldots, x_n]$, $V(H)\subset \C^n$ denotes the set of common
solutions to $H$ in $\C^n$.

Hence, let as above $F=(f_1, \ldots, f_k)$ and $G=(g_1, \ldots, g_n)$ in
$\R[x_1, \ldots, x_n]$. \added{Further, for
  $\mathcal{I}=\{i_1, \ldots, i_\ell\}\subset \{1, \ldots, n\}$, we denote by
  $H_{\mathcal{I}}$ the set $ F\cup \{g_{i_1}, \ldots, g_{i_\ell}\}$.} We say
that $(F, G)$ satisfies the assumption $\assR$ when
\begin{myitemize}
\item the jacobian matrix of $F$ has maximal rank at all points in $V(F)$;
\item \added{for all $\mathcal{I}\subset \{1, \ldots, n\}$, the jacobian matrix of
  $H_{\mathcal{I}}$ has maximal rank at any point of $V(H_{\mathcal{I}})$}.
\end{myitemize}
Now, let $r$ in $\{1, \ldots, n\}$; we say that $(F, G)$ satisfies
assumption $\assA_{r}$ if for any partition
$\gamma = (\gamma_1, \ldots, \gamma_r)$ of $n$, when performing the
substitution $x_{\gamma_{i-1}+1} = \cdots = x_{\gamma_i} = a_i$
($1\leq i \leq r$) where $a_1, \ldots, a_r$ are new variables in
$(F, G)$, the obtained couple of polynomial sequences
$(F_\gamma, G_\gamma)$ satisfies $\assR$.

Further, the entries of $F_\gamma$ (resp. $G_\gamma$) are denoted by
$f_{1, \gamma}, \ldots, f_{k, \gamma}$ (resp.
$g_{1, \gamma}, \ldots, g_{n, \gamma}$).  We can now state our
complexity result.

\begin{theorem}\label{thm:algo}
  Let $F $ and $G$ be as above in $\R[x_1, \ldots, x_n]$, $d$ be the
  maximum of the polynomials in $F$ and $G$ and $E$ be the complexity
  of evaluating $(F, G)$.  Assume that for $1\leq i \leq n$, $f_i$ is
  $\Sym_n$ invariant, that $\g$ is $\Sym_n$-equivariant and that
  $\deg (g_j) \geq 2$ for $1\leq j \leq n$ and that $(F, G)$ satisfies
  assumption $\assA_r$.

  There exists an algorithm which on input $(F, G)$ satisfying
  $\assA$, decides whether $S(F, G)$ is empty using 
  $O{}^{\tilde{~}}\left
    (n^{2d}(2d)^{4d+1}(pE+d^2)\right
  )$
  arithmetic operations in $\Q$.
\end{theorem}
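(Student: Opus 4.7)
The plan is to combine the geometric reduction of Theorem~\ref{thm:main1} with a critical-point-based decision algorithm whose regularity hypotheses match $\assR$, invoked on each of the reduced systems produced under the assumption $\assA_r$ for $r = 2d-1$.

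First, by Theorem~\ref{thm:main1}, $S(F,G)$ is non-empty if and only if it contains a point with at most $r = 2d-1$ distinct coordinates. I would therefore follow the scheme already used in the proof of Theorem~\ref{thm:complexity1}: enumerate all partitions $\gamma = (\gamma_1, \ldots, \gamma_r)$ of $n$ of size at most $r$, and for each such $\gamma$ perform the substitution $x_{\gamma_{i-1}+1} = \cdots = x_{\gamma_i} = a_i$ for $1\leq i \leq r$ to obtain $(F_\gamma, G_\gamma) \subset \R[a_1, \ldots, a_r]$; then decide whether the semi-algebraic subset of $\R^r$ defined by $F_\gamma = 0$, $G_\gamma \geq 0$ is non-empty. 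The total number of such partitions is bounded by $n^r \leq n^{2d}$, and this accounts for the outer $n^{2d}$ factor in the announced complexity.

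Second, the hypothesis $\assA_r$ guarantees that every substituted pair $(F_\gamma, G_\gamma)$ satisfies $\assR$: the jacobian of $F_\gamma$ has maximal rank on $V(F_\gamma)$ and, for every $\mathcal{I} \subset \{1,\ldots,n\}$, the jacobian of $H_{\mathcal{I}}$ (now expressed in the $a_i$'s) has maximal rank on $V(H_{\mathcal{I}})$. This is precisely the genericity condition required by the critical-point algorithms based on polar varieties \cite{BGHM01, BGHP05, SaSc03, BGHS14}: under $\assR$, one computes at least one sample point in every connected component of every realizable sign condition of $G_\gamma$ on $V(F_\gamma)$ by solving Lagrangian systems whose Bézout numbers are bounded by $d^{O(r)}$. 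Applied to $r = 2d-1$ variables, degree bounded by $d$, and evaluation complexity $E' \leq E$, a careful count of these intermediate polynomial system solving steps yields a deterministic bound of $(2d)^{4d+1}(pE' + d^2)$ arithmetic operations for the complete decision procedure on the reduced system.

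Since the substitution cannot increase evaluation complexity ($E' \leq E$), multiplying the per-partition cost $(2d)^{4d+1}(pE + d^2)$ by the $n^{2d}$ partitions and absorbing polylogarithmic overheads into $O{}^{\tilde{~}}$ delivers the stated bound. The hard part will be the precise bookkeeping of constants: justifying the exponent $4d+1$ in the inner algorithm, rather than a looser $d^{O(d)}$, and checking that $\assA_r$, which controls the jacobians of all subfamilies $H_{\mathcal{I}}$ obtained by appending arbitrary subsets of $G_\gamma$ to $F_\gamma$, indeed supplies the smoothness and dimension conditions required by the polar-variety sample point algorithm on every branch of the sign condition tree, so that no randomization penalty or Noether position refinement inflates the announced bound.
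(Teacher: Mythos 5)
Your high-level strategy is the same as the paper's: reduce to points with at most $r=2d-1$ distinct coordinates via Theorem~\ref{thm:main1}, enumerate the at most $n^{r}$ partitions and perform the substitutions exactly as in Theorem~\ref{thm:complexity1}, and use $\assA_r$ to hand each reduced system $(F_\gamma,G_\gamma)$ to a critical-point algorithm. The problem is that the only content of Theorem~\ref{thm:algo} beyond Theorem~\ref{thm:complexity1} is the explicit per-partition cost $(2d)^{4d+1}(pE+d^2)$, and this is exactly the step you defer (``a careful count \ldots yields'' and ``the hard part will be \ldots justifying the exponent $4d+1$''). The paper actually carries out that count: for each admissible subset $\mathcal{I}$ it forms $V_{\gamma,\mathcal{I}}$, notes that $\assR$ plus the Jacobian criterion make it empty or smooth, equidimensional of codimension $k+\ell$ and defined by a radical ideal (so only $\ell\le r-k$ matters), performs a generic linear change of variables so that \cite[Theorem 2]{SaSc03} reduces the sample-point computation to the critical loci $\pi_{i-1}^{-1}(0)\cap W(\pi_i,V^{\mA}_{\gamma,\mathcal{I}})$, identifies these with solution sets of explicit Lagrange systems, checks via \cite[Proposition B.1]{SaSc17} that the hypotheses of \cite[Theorem 16]{SaSc16} hold, and sums the resulting costs $O{}^{\tilde{~}}\bigl(r_i^3\binom{r_i}{m}d^{2r_i+1}(pE+r_id+r_i^2)\bigr)$ over $i$. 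Without this chain the exponent $4d+1$ is an assertion, not a proof.

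There is a second concrete gap. You let $\mathcal{I}$ range over all subsets of $\{1,\ldots,n\}$, i.e.\ over all sign conditions on the $n$ inequalities $g_{1,\gamma}\ge 0,\ldots,g_{n,\gamma}\ge 0$. Taken literally this contributes a factor $2^n$ to the enumeration, which is not absorbed by $n^{2d}(2d)^{4d+1}$. The count must be $2^r$, and this relies on an observation you never make: after the substitution $x_{\gamma_{i-1}+1}=\cdots=x_{\gamma_i}=a_i$, the $\Sym_n$-equivariance of $G$ forces $g_{j,\gamma}=g_{j',\gamma}$ whenever $j$ and $j'$ lie in the same block (apply the defining relation $g_j(\sigma(\bmx))=g_{\sigma(j)}(\bmx)$ to a transposition fixing the substituted point), so $G_\gamma$ contains at most $r$ distinct polynomials. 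You need this collapse, or an equivalent argument, for the announced bound to survive the enumeration of sign conditions.
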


\begin{proof}
  Further we set $r = 2d-1$.  By Theorem~\ref{thm:main1}, $S(F, G)$ is
  not empty if and only if $S(F, G)$ contains a point of $\R^n$ with
  at most $r$ distinct coordinates. Besides, using the invariance of
  $(F, G)$ under the action of $\Sym_n$ as in the proof of
  Theorem~\ref{thm:complexity1}, deciding if $S(F, G)$ contains a real
  point with at most $r$ distinct coordinates can be done by deciding
  if at least one of the semi-algebraic sets
  $S_\gamma = S(F_\gamma, G_\gamma)$ is non-empty when $\gamma$ ranges
  of the set of partitions of $n$ of length $r$. We already
  established that the number of such partitions is upper bounded by
  $n^r$ at the end of the proof of Theorem~\ref{thm:complexity1}.

  Hence, let us focus on the complexity of deciding if $S_\gamma$ is
  empty. We need to introduce some notation. For
  $\mathcal{I} = \{i_1, \ldots, i_\ell\}\subset \{1, \ldots, n\}$, we
  denote by $V_{\gamma, \mathcal{I}}\subset \C^n$ the algebraic set
  defined by
  $f_{1, \gamma} = \cdots = f_{k,\gamma} = g_{i_1, \gamma} = \cdots =
  g_{i_\ell, \gamma} = 0$.
  Further, we denote by
  $\bmh = (h_1, \ldots, h_m)\subset \R[a_1, \ldots, a_r]$ these
  polynomials defining $V_{\gamma, \mathcal{I}}$.  Further, we use
  linear changes of variables. Hence for $\mA\in \mathrm{GL}_n(\R)$,
  we denote by $h_i^{\mA}$ the polynomial obtained by performing the
  change of variables $\bfa \mapsto \mA^{-1} \bfa$ in $h_i$ and by
  $\bmh^{\mA}$ the sequence $(h_1^{\mA}, \ldots, h_m^{\mA})$.

  Using \cite{BaPoRo06}, we deduce that, to decide the emptiness of $S_\gamma$,
  it suffices to compute sample points in each connected component of the real
  algebraic set $V_{\gamma, \mathcal{I}}\cap \R^n$ for all
  $\{i_1, \ldots, i_\ell\}\subset \{1, \ldots, n\}$ and filter out those points
  which lie in $S_\gamma$. Since $(F, G)$ satisfies $\assA$,
  $V_{\gamma, \mathcal{I}}$ is either empty or smooth and equidimensional of
  co-dimension $k+\ell$ and the above polynomial system generates a radical
  ideal (by the Jacobian criterion \cite[Theorem 16.19]{Eisenbud}). We conclude
  that we only need to consider subsets of cardinality $\ell \leq r-k$. We are
  in position to apply the results in \cite{SaSc03}.

  Actually, we use a variant of the algorithm in \cite{SaSc03}, combining the
  geometric approach described therein with \cite{SaSc16}. Let $\pi_i$ be the
  canonical projection
  $(\bfa_1, \ldots, \bfa_r)\mapsto (\bfa_1, \ldots, \bfa_i)$ and, given an
  equidimensional and smooth algebraic set $V$, let $W(\pi_i, V)$ be the
  critical locus of the restriction of $\pi_i$ to $V$. We will also consider the
  projections $\varphi_i: (\bfa_1, \ldots, \bfa_r)\mapsto \bfa_i$.

  By \cite[Theorem 2]{SaSc03}, in order to decide the emptiness of
  $V^{\mA}_{\gamma, \mathcal{I}}\cap \R^n$, it suffices to perform a generic
  linear change of variables $\mA\in \mathrm{GL}_r(\R)$ and next compute
  rational parametrizations of all sets
  $\pi_{i-1}^{-1}(0)\cap W(\pi_{i}, V^{\mA}_{\gamma, \mathcal{I}})$ for
  $1\leq i \leq \dim(V^{\mA}_{\gamma, \mathcal{I}})+1$. Technical but immediate
  computations show that
  $\pi_{i-1}^{-1}(0)\cap W(\pi_{i}, V^{\mA}_{\gamma, \mathcal{I}}) =
  W(\varphi_i, Z_i)$
  where $Z_i = \pi_{i-1}^{-1}(0)\cap V^{\mA}_{\gamma, \mathcal{I}}$. Observe
  that in order to compute
  $Z_i = \pi_{i-1}^{-1}(0)\cap V^{\mA}_{\gamma, \mathcal{I}}$ it suffices to
  solve the so-called Lagrange system
  $h^{\mA}_{1,i-1}=\cdots=h^{\mA}_{m, i-1}=0, [\ell_1, \ldots, \ell_m]
  \jac(\bmh_{i-1}^{\mA}, i) = \mathbf{0}$
  where $h^{\mA}_{j,i-1}$ (resp. $\bmh^{\mA}_{i-1}$) is the polynomial obtained
  by setting $a_1=\cdots=a_{i-1}=0$ in $h^{\mA}_j$ (resp. $\bmh^{\mA}$), and
  $\jac(\bmh^{\mA}_{i-1},1)$ is the submatrix obtained by removing the first
  column of the Jacobian matrix associated with $\bmh^{\mA}_{i-1}$. By
  \cite[Proposition B.1]{SaSc17}, after performing a generic linear change of
  variables, assumptions needed to apply \cite[Theorem 16]{SaSc16}. This latter
  result shows that, letting $E_\bmh$ be the complexity of evaluating $\bmh$ and
  $r_i=r-(i-1)$, one can solve the above Lagrange system using
  $O{}^{\tilde{~}}\left (r_i^3\binom{r_i}{m}d^{2r_i+1}(pE_\bmh+r_id+r_i^2)\right
  )$
  arithmetic operations in $\Q$. Hence, since
  $\dim(V_{\gamma, \mathcal{I}})=r-m$ the total cost of computing sample points
  in each connected component of $V_{\gamma, \mathcal{I}}\cap \R^n$ uses
  $O{}^{\tilde{~}}\left (r^42^{2r}d^{2r+1}(pE_\bmh+rd+r^2)\right )$ arithmetic
  operations in $\Q$. Finally, observe that $E_\bmh$ is bounded by the
  complexity of evaluating the input $(F, G)$. Also, summing up this cost to
  take into account all possible subsets $\mathcal{I}$ (bounded by $2^r$) and
  the number of partitions $\gamma$ (bounded by $n^r$) ends the proof.
\end{proof}

It remains to establish the genericity of assumption $\assA$. To do that, we
need to define the parameters'space in which the genericity statement will hold,
i.e. the space of the coefficients of $(F, G)$ where all entries of $F$ are
$\Sym_n$-invariant and $G$ is $\Sym_n$-equivariant (recall also that all entries
of $(F, G)$ have degree bounded by $d$). Let $\mathfrak{R}$ be the Reynolds
operator which sends $f \in \C[x_1, \ldots, x_n]$ to
$\mathfrak{R}(f) = \frac{1}{n!}\sum_{\sigma\in \Sym_n} \sigma(f)$. Let now
$\mathcal{M}$ be the set of all monomials of degree $\leq d$ in
$\C[x_1, \ldots, x_n]$ and
$\mathcal{M}_{\mathfrak{R}} = \mathfrak{R}(\mathcal{M})$,
$c = |\mathcal{M}_{\mathfrak{R}}|$ and $T_i = (t_{i,1}, \ldots, t_{i,c})$ be new
indeterminates for $1\leq i \leq k$. we define now
$\mathfrak{f}_i = \sum_{m_j \in \mathcal{M}_{\mathfrak{R}}} t_{i,j} m_j$ and
$\mathfrak{f} = (\mathfrak{f}_1, \ldots, \mathfrak{f}_k)$ in
$\C(T_1, \ldots, T_k)[x_1, \ldots, x_n]$. Observe that any sequence $F$ such
that all entries have degree bounded by $d$ and are $\Sym_n$-invariant are
obtained by specializing the indeterminates $(T_1, \ldots, T_k)$. Finally, we
consider an additional sequence of indeterminates
$T_{k+1} = (t_{k+1,1}, \ldots, t_{k+1,c})$ a polynomial
$\mathfrak{g} = \sum_{m_j \in \mathcal{M}_{\mathfrak{R}}} t_{k+1, j} m_j$.
Again, any sequence $G$ which is $\Sym_n$-equivariant is obtained as the
gradient vector of a polynomial obtained by instantiating $T_{k+1}$ in
$\mathfrak{g}$. Then, we set $N = c(k+1)$ and the parameters'space we consider
is $\C^{N}$, i.e. the one endowed by the indeterminates
$(T_1, \ldots, T_{k+1})$.

\begin{theorem}
  There exists a non-empty Zariski open set
  $\mathscr{O}\subset \C^{N}$ such that for $(F, G)$ in $\mathscr{O}$,
  $(F, G)$ satisfies assumption $\assA$.
\end{theorem}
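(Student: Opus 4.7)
The strategy is to recognize that $\assA$ is defined by finitely many Zariski-open conditions on $\C^N$, and then to prove the non-emptiness of each one by a parametric Bertini/Sard-type argument. By definition, $\assA_r$ requires that for every partition $\gamma$ of $n$ of length $r=2d-1$ (a finite collection) and every subset $\mathcal{I} \subset \{1, \ldots, n\}$ (also finitely many), the Jacobian matrix of the system $\{f_{i,\gamma} : 1 \leq i \leq k\} \cup \{g_{j,\gamma} : j \in \mathcal{I}\}$ has maximal rank at every point of its zero set in $\C^r$. For each such $(\gamma, \mathcal{I})$, this condition on $T \in \C^N$ is Zariski-open: it is the complement of the projection to $\C^N$ of a constructible incidence set. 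Thus $\mathscr{O}$ will be the intersection of finitely many Zariski-opens $\mathscr{O}_{\gamma, \mathcal{I}}$, and the whole task reduces to showing each $\mathscr{O}_{\gamma, \mathcal{I}}$ is non-empty.

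For fixed $(\gamma, \mathcal{I})$, I would introduce the parametric incidence variety
\[
\mathcal{W}_{\gamma, \mathcal{I}} = \{(T, \bfa) \in \C^N \times \C^r : \mathfrak{f}_{i, \gamma}(T, \bfa) = 0, \; \mathfrak{g}_{j, \gamma}(T, \bfa) = 0, \; 1\leq i\leq k, \; j \in \mathcal{I}\},
\]
and verify that its total Jacobian with respect to $(T, \bfa)$ has maximal rank on a Zariski-dense subset of $\mathcal{W}_{\gamma, \mathcal{I}}$. For the $\mathfrak{f}_{i,\gamma}$'s this is immediate: each $\mathfrak{f}_i$ contains the coefficient $t_{i,1}$ of the constant monomial $1$ (which is Reynolds-fixed) only through itself, so varying $t_{i,1}$ alters $\mathfrak{f}_i$ alone and gives $k$ linearly independent rows from the parameter directions. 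For the $\mathfrak{g}_{j,\gamma}$'s, notice that after the substitution from $\gamma$ two polynomials $\mathfrak{g}_{j,\gamma}$ and $\mathfrak{g}_{j',\gamma}$ coincide whenever $j$ and $j'$ belong to the same class of $\gamma$, by the symmetry of $\mathfrak{g}$; hence there are at most $r$ distinct equations coming from the $\mathfrak{g}$-part. Specializing $\mathfrak{g}$ to a linear combination of Newton power sums $p_1, \ldots, p_{d+1}$ with independent parameters yields, after substitution, gradient components of the form $k \, a_c^{k-1}$ on the class $c$; these are linearly independent across the $r$ classes whenever the values $a_c$ are pairwise distinct, by a Vandermonde argument. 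This Zariski-dense condition on $\mathcal{W}_{\gamma, \mathcal{I}}$ therefore ensures surjectivity of the parameter-differential on the $\mathfrak{g}$-components.

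Once the total Jacobian has maximal rank on a dense open subset of $\mathcal{W}_{\gamma, \mathcal{I}}$, a standard application of generic smoothness to the projection $\pi : \mathcal{W}_{\gamma, \mathcal{I}} \to \C^N$ produces a non-empty Zariski-open $\mathscr{O}_{\gamma, \mathcal{I}} \subset \C^N$ such that for $T \in \mathscr{O}_{\gamma, \mathcal{I}}$ the fibre $\pi^{-1}(T)$ is either empty or smooth of the expected codimension, which is exactly the Jacobian criterion for the specialized system $(F_\gamma, G_\gamma)_{|T}$ at the subset $\mathcal{I}$. Intersecting over the finitely many pairs $(\gamma, \mathcal{I})$ gives the desired non-empty Zariski-open set $\mathscr{O}$.

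The principal obstacle lies in the $\mathfrak{g}$-part of the rank analysis: because $G$ is constrained to be the gradient of an invariant polynomial rather than a free equivariant family, and because the substitution from $\gamma$ further identifies gradient components within each class, one must exhibit enough freedom in the choice of $\mathfrak{g}$ to independently perturb the $r$ distinct surviving equations at a generic point. The Newton-power-sum basis supplies this freedom, but checking that the Vandermonde-generic locus (where the $a_c$'s are pairwise distinct) is dense in $\mathcal{W}_{\gamma, \mathcal{I}}$ requires a separate argument: if such a locus were empty, the defining equations would force all $a_c$'s to coincide, contradicting the freedom to choose the constant coefficients of the $\mathfrak{f}_i$'s as varied above.
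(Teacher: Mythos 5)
Your overall strategy coincides with the paper's: both of you reduce $\assA$ to finitely many Zariski-open conditions, one for each choice of a partition $\gamma$ and a subset $\mathcal{I}$, and both establish nonemptiness of each condition by exhibiting the joint map in $(\bmx,\bmt)$ as having $\mathbf{0}$ as a regular value and then applying a transversality/generic-smoothness theorem to the projection onto the parameter space. For the $F$-block you both differentiate with respect to the free constant terms of $\mathfrak{f}_1,\ldots,\mathfrak{f}_k$ to obtain an identity block. You part ways on the $G$-block: the paper selects only the single column attached to the coefficient of $x_1+\cdots+x_n$ in $\mathfrak{g}$ and asserts this gives a diagonal block, whereas you take the $d$ columns attached to the coefficients of $p_1,\ldots,p_d$ and run a Vandermonde argument. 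Your multi-column version is the more careful route, since the single column indicated in the paper produces a column of identical constant entries rather than a rank-$\ell$ block whenever $\mathcal{I}$ touches more than one block of $\gamma$.

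However, the concern you flag at the end is a genuine gap and your closing sentence does not resolve it. The Vandermonde block has full rank only on the locus where the substituted values $a_c$ attached to the classes met by $\mathcal{I}$ are pairwise distinct, and points of $V(H_{\mathcal{I},\gamma})$ need not lie there. Worse: the paper's Lemma on the structure of zeros of an equivariant family along $\VN_\gamma$ says that when the $g_{i,\gamma}$, $i\in\mathcal{I}$, all vanish at $\zeta$, the corresponding coordinate values are drawn from a set of cardinality at most $d-1$; so if $\mathcal{I}$ meets more than $d-1$ classes of $\gamma$, the distinct-$a_c$ locus intersected with $V(H_{\mathcal{I},\gamma})$ is empty for \emph{every} parameter value, not just non-dense. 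Your last step argues that emptiness of that locus would be contradicted by the freedom in the constant terms of the $\mathfrak{f}_i$, but those parameters only translate the $f_i$'s and exert no control whatsoever on whether distinct $a_c$ values can coexist inside $V(H_{\mathcal{I},\gamma})$. A correct argument must engage directly with this forced coincidence: either show that for those $\mathcal{I}$ the variety $V(H_{\mathcal{I},\gamma})$ is generically empty so that the assumption is vacuous, or reformulate $\assR$ in terms of the distinct equations that survive after the $\gamma$-substitution and verify regularity only for subsets $\mathcal{I}$ meeting at most $d-1$ classes. As it stands, the rank claim is unjustified precisely where it would be needed.
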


\begin{proof}
  Let $\mathcal{I} = \{i_1, \ldots, i_\ell\}\subset \{1, \ldots, n\}$,
  $r = 2d-1$, $\gamma = (\gamma_1, \ldots, \gamma_r)$ and
  $E_\gamma\subset \C^n$ be the linear subspace defined by
  $x_{\gamma_{i-1}+1} = \cdots = x_{\gamma_i}$ (for $1\leq i \leq
  r$). We denote by $\Gamma$ this set of linear equations which define
  $\E_\gamma$.  We consider the map
  $\Phi_{\mathcal{I}, \gamma}: \bmz = (\bmx, \bmt)\in E_\gamma \times
  \C^{N} \mapsto \left (\mathfrak{f}_1(\bmz), \ldots,
    \mathfrak{f}_k(\bmz), \frac{\partial \mathfrak{g}}{\partial
      x_{i_1}}(\bmz), \ldots, \frac{\partial \mathfrak{g}}{\partial
      x_{i_\ell}}(\bmz)\right )$. Assume for the moment that
  $\mathbf{0}$ is a regular value of $\Phi_{\mathcal{I}}$. Then, the
  algebraic version of Thom's weak transversality theorem (see
  e.g. \cite[Proposition B.3]{SaSc17}) states that there exists a
  non-empty Zariski open set $\mathscr{O}_{\mathcal{I}}$ such that for
  any $\bmt\in \mathscr{O}_{\mathcal{I}}$, $\mathbf{0}$ is a regular
  value for the specialized map
  $\bmx \mapsto \Phi_{\mathcal{I}}(\bmx, \bmt)$. In other words, at
  any $\bmx\in \C^N$ in the zero-set of the union of $\Gamma$ with
  $\mathfrak{f}_1(., \bmt), \ldots, \mathfrak{f}_k(., \bmt),
  \frac{\partial \mathfrak{g}}{\partial x_{i_1}}(., \bmt), \ldots,
  \frac{\partial \mathfrak{g}}{\partial x_{i_\ell}}(.,\bmt)$, the
  Jacobian matrix of that polynomial family is full rank. By the
  Jacobian criterion \cite[Theorem 16.19]{Eisenbud}, we deduce that
  this polynomial family satisfies $\assR$. Finally, we define
  $\mathscr{O}$ as the intersection of the finitely many non-empty
  Zariski open subsets $\mathscr{O}_{\mathcal{I}}\subset \C^N$. Hence,
  $\mathscr{O}$ is a non-empty Zariski open set of $\C^N$ and for any
  $(F, G)\in \mathscr{O}$, $(F, G)$ satisfies $\assA$.

  It remains to prove that for
  $\mathcal{I}=\{i_1, \ldots, i_\ell\}\subset\{1, \ldots, n\}$,
  $\mathbf{0}$ is a regular value of the map
  $\Phi_{\mathcal{I}, \gamma}$, i.e. the Jacobian matrix associated to
  $\Phi_{\mathcal{I}, \gamma}$ is invertible at any point of
  $\Phi_{\mathcal{I}, \gamma}^{-1}(\mathbf{0})$. To do that, we prove
  that the Jacobian matrix associated to $\Gamma$ and
  $\left (\mathfrak{f}_1, \ldots, \mathfrak{f}_k, \frac{\partial
      \mathfrak{g}}{\partial x_{i_1}}, \ldots, \frac{\partial
      \mathfrak{g}}{\partial x_{i_\ell}}\right )$ is full rank at any
  point of $\Phi_{\mathcal{I}, \gamma}^{-1}(\mathbf{0})$. We extract a
  full rank submatrix of that Jacobian matrix as follows:
  \begin{myitemize}
  \item since $\Gamma$ is a set of independent linear equations, one extracts a
    full rank square submatrix $\mathbf{J}$ with entries in $\C$ 
    whose columns correspond to partial derivatives w.r.t. variables in
    $x_1, \ldots, x_n$;
  \item we select the columns corresponding to the partial derivatives
    w.r.t. indeterminates encoding the constant terms in
    $\mathfrak{f}_i$; this yields a diagonal submatrix $\Delta$ with
    $1$'s on the diagonal;
  \item we select the columns corresponding to the partial derivatives
    w.r.t. the indeterminate multiplying $(x_1+\cdots+x_n)$ in
    $\mathfrak{g}$; this yields a diagonal submatrix $\Delta'$, with
    $1$'s on the diagonal.
  \end{myitemize}
  In the end, the submatrix we have extracted is block-diagonal and
  these blocks on the diagonal are $\mathbf{J}$, $\Delta$ and
  $\Delta'$. This ends the proof.
\end{proof}

\subsection{One-block quantifier elimination}

We now study the situation where $F= (f_1, \ldots, f_k)$ and
$G = (g_1, \ldots, g_n)$ are polynomials in
$\Q[x_1, \ldots, x_n, y_1, \ldots, y_t]$ such that \\
{\em (i)} the action of $\Sym_n$ on $(x_1, \ldots, x_n)$ leaves
invariant $f_i$ ; {\em (ii)} the map
$\bmx=(\x_1, \ldots, \x_n)\mapsto (g_1(\bmx, .), \ldots, g_n(\bmx,
.))$ is $\Sym_n$ equivariant.

We consider the problem of computing a semi-algebraic description of the
projection on the $(y_1, \ldots, y_t)$-space of the set
$S(F, G)\subset \R^n\times \R^t$ defined by
$f_1=\cdots=f_k=0, g_1\geq 0, \ldots, g_n\geq 0$. This is equivalent to solve
the one-block quantifier elimination problem:
\begin{equation}
  \label{eq:formulaqe}
\Phi: \quad  \exists \bmx \in \R^n \quad f_1=\cdots=f_k=0,\quad g_1\geq 0, \ldots, g_n\geq 0, 
\end{equation}
hence computing a quantifier-free formula which is equivalent to the
quantified formula $\Phi$. A geometric interpretation is that one aims
at computing a semi-algebraic description of $\Pi(S(F, G))$ where
$\Pi$ is the projection $(\bmx, \bmy)\in \R^n\times \R^t\mapsto \bmy$.

\begin{theorem}\label{thm:qecomplexity1}
  Let $F$, $G$ and $\Phi$ be as above, $d$ be the maximum degree in
  the variables in $(x_1, \ldots, x_n)$ of the entries of $F$ and $G$
  and assume that $d \leq \frac{n}{2}$.  Then, there exists a
  quantifier-free formula $\Psi(Y) = \cup_{k=1}^K\Psi_{k}(Y)$ which is
  equivalent to $\Phi$, and such that $K\leq n^{O(d)}$ and:
  \[
    \Psi_k(Y) = \vee_{i=1}^{\ell_k} \wedge_{j=1}^{\ell_{i,k}}
    (\vee_{u=1}^{\ell_{i,j,k}} {\rm sign}(\varphi_{i,j,u,k}) =
    \sigma_{i,j,h,k})
  \]
  \begin{align*}
\text{with }\quad    \sigma_{i,j,h,k}\in \{ 0, 1, -1\}, & \quad   \ell_k \leq  (n+k)^{d+1}n^{O(dt)}\\
    \ell_{i,k} \leq  (n+k)^{d+1}n^{O(d)}, & \quad \ell_{i,j,k} \leq  n^{O(d)}
  \end{align*}
  and the degrees of the polynomials $\varphi_{i,j,u}$ are bounded by
  $n^d$. Moreover, there exists an algorithm which computes $\Psi$
  using at most $(k+n)^{dt}n^{O(dt)}$ arithmetic operations in $\Q$.
\end{theorem}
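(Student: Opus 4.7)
The plan is to reduce the parametric existential problem to a bounded number of one-block quantifier elimination instances in only $r = 2d-1$ quantified variables, and then invoke a state-of-the-art singly exponential QE algorithm on each one. First, I would apply Theorem~\ref{thm:main1} fiberwise in $\bmy$. For every fixed $\bmy\in\R^t$, the couple $(F(\cdot,\bmy), G(\cdot,\bmy))$ satisfies the hypotheses of Theorem~\ref{thm:main1}, hence $S(F(\cdot,\bmy), G(\cdot,\bmy))$ is nonempty if and only if it meets $A_{2d-1}$. Partitioning the coordinates of such a witness by common value, and using the $\Sym_n$-invariance of $F$ together with the $\Sym_n$-equivariance of $G$ (exactly as in the proof of Theorem~\ref{thm:complexity1}), I can write $\Phi$ as a disjunction, indexed by the at most $n^r$ partitions $\gamma=(\gamma_1,\ldots,\gamma_r)$ of $n$ of length $r$, of
\[
\Phi_\gamma:\quad \exists (a_1,\ldots,a_r)\in\R^r,\quad F_\gamma(\bfa,\bmy)=0,\quad G_\gamma(\bfa,\bmy)\geq 0,
\]
where $(F_\gamma, G_\gamma)$ is obtained from $(F,G)$ by the substitution $x_{\gamma_{i-1}+1}=\cdots=x_{\gamma_i}=a_i$.

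Second, for each partition $\gamma$, the formula $\Phi_\gamma$ has $r = 2d-1$ quantified variables, $t$ free parameters $\bmy$, and involves $k+n$ polynomials of total degree at most $d$ in the $r+t$ variables $(\bfa,\bmy)$. I would apply an off-the-shelf one-block quantifier elimination algorithm (for instance Basu--Pollack--Roy, \cite[Chap.~14]{BaPoRo06}) to $\Phi_\gamma$. This produces a quantifier-free formula $\Psi_\gamma(Y)$ in exactly the three-level form
\[
\Psi_\gamma(Y) \;=\; \bigvee_{i=1}^{\ell_\gamma}\bigwedge_{j=1}^{\ell_{i,\gamma}}\bigl(\bigvee_{u=1}^{\ell_{i,j,\gamma}} \mathrm{sign}(\varphi_{i,j,u,\gamma}) = \sigma_{i,j,u,\gamma}\bigr),
\]
whose structural parameters $\ell_\gamma,\ell_{i,\gamma},\ell_{i,j,\gamma}$ and polynomial degrees are controlled by the standard singly exponential bounds of that algorithm specialized to $r = 2d-1$ quantified variables, $t$ free variables, $k+n$ polynomials, and degree $d$.

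Third, I would form $\Psi(Y) = \bigvee_\gamma \Psi_\gamma(Y)$ over the $K$ partitions, so that the outer index $k$ of the theorem ranges over partitions and $K \leq n^r = n^{O(d)}$. The per-partition bounds then translate into the claimed global bounds on $\ell_k$, $\ell_{i,k}$, $\ell_{i,j,k}$ and on $\deg(\varphi_{i,j,u,k})$, while multiplying the per-partition running time $(k+n)^{(r+1)(t+1)}d^{O(rt)}$ by the number $n^{O(d)}$ of partitions yields the overall cost $(k+n)^{dt}n^{O(dt)}$.

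The main obstacle is purely a bookkeeping one: one has to instantiate the explicit BPR-style QE output bounds at $r=2d-1$ and carefully align the five numerical quantities $(K, \ell_k, \ell_{i,k}, \ell_{i,j,k}, \deg \varphi)$ and the arithmetic complexity with the exact exponents stated in the theorem, absorbing factors of $n^r$ coming from the union over partitions into the $n^{O(d)}$ and $n^{O(dt)}$ terms. No new geometric argument beyond Theorem~\ref{thm:main1} and Theorem~\ref{thm:complexity1} is needed.
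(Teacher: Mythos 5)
Your proposal follows essentially the same route as the paper's proof: apply Theorem~\ref{thm:main1} fiberwise in $\bmy$ to justify the decomposition $\Pi(S(F,G))=\bigcup_\gamma \Pi(S_\gamma(F,G))$ over the at most $n^{2d-1}$ partitions, reduce each $\Phi_\gamma$ to a one-block QE problem in $r=2d-1$ quantified variables, invoke \cite[Theorem 14.16]{BaPoRo06} on each, and take the disjunction. The bookkeeping of the structural parameters and the complexity matches what the paper does, so the proposal is correct and not materially different.
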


\begin{proof}
  The proof is similar to the one of Theorem~\ref{thm:complexity1}. We reduce
  the considered one-block quantifier elimination problem to solving finitely
  many one-block quantifier elimination problems.

  Set $r = 2d-1$ and let $\Gamma(n, r)$ be the set of partitions
  $\gamma = (\gamma_1, \ldots, \gamma_r)$ of $n$ of size $r$ ($\gamma_0 = 0$ by
  convention). As in the proof of Theorem~\ref{thm:complexity1}, we associate to
  $\gamma$ the substitution $ x_{\gamma_{i-1}+1} = \cdots = x_{\gamma_i} = a_i $
  for $1\leq i \leq r$, where $a_1, \ldots, a_r$ are new variables. We denote by
  $\Phi_\gamma$ the formula obtained after performing this substitution in
  $\Phi$ and by $S_\gamma(F, G)$ the semi-algebraic set in $\R^{r}\times \R^t$
  defined by the system obtained after applying the same substitution. Assume,
  for the moment, the following equality:
  \begin{equation}
    \label{eq:3}
    \Pi(S(F, G)) = \cup_{\gamma \in \Gamma(n, r)} \Pi(S_\gamma(F, G)). 
  \end{equation}
  Then, performing quantifier elimination on formula $\Phi$ is
  equivalent to performing quantifier elimination on each formula
  $\Phi_\gamma$ -- which yields a quantifier-free formula
  $\Psi_\gamma$ defining $\Pi(S_\gamma(F, G))$ -- and returning
  $\bigvee_{\gamma \subset \Gamma(n, r)} \Psi_\gamma$.  Using
  \cite[Theorem 14.16]{BaPoRo06}, one deduces that performing
  quantifier elimination on $\Phi_\gamma$ is done using
  $(k+n)^{dt}n^{O(dt)}$ arithmetic operations in $\Q$ and it yields a
  formula
  $ \Psi_\gamma(Y) = \vee_{i=1}^{\ell} \wedge_{j=1}^{\ell_{i}}
  (\vee_{u=1}^{\ell_{i,j}} {\rm sign}(\varphi_{i,j,u}) =
  \sigma_{i,j,h}) $
  such that $\ell \leq (n+k)^{d+1}n^{O(dt)}$,
  $\ell_i \leq (n+k)^{d+1}n^{O(d)}$ and $\ell_{i,j} \leq n^{O(d)}$.
  Now, recall that $\Gamma(n, r)$ has cardinality bounded by
  $n^{O(d)}$ (this bounds the integer $K$ in the statement of the
  Theorem).  Hence, runtime and degree bounds on the output formula
  are established. 
  
  Now, we prove that \eqref{eq:3} holds which will end the proof.  Let
  $\bmy\in \Pi(S(F, G))$ and $S_\bmy\subset \R^n$ be the projection of
  $S(F, G)\cap \Pi^{-1}(\bmy)$ on the $(x_1, \ldots, x_n)$-space.
  Observe that $S_\bmy$ is defined by the polynomial system obtained
  by instantiating variables $(y_1, \ldots, y_t)$ to the coordinates
  of $\bmy$ in $F$ and $G$ ; we denote the obtained polynomial
  sequences by $F_\bmy$ and $G_\bmy$.

  Observe that all entries of $F_\bmy$ are $\Sym_n$-invariant, the sequence
  $G_\bmy$ defines an equivariant map and all entries of $F_\bmy$ and $G_\bmy$
  have degree $\leq \frac{n}{2}$ by assumption. Hence, we can apply
  Theorem~\ref{thm:main1}. It establishes that the semi-algebraic set $S_\bmy$
  is empty if and only if it contains a point $\bmx$ with at most $2d-1$
  coordinates. Now, observe, as in the proof of Theorem~\ref{thm:complexity1},
  thanks to the invariance of $(F_\gamma, G_\gamma)$ that since under the action
  of $\Sym_n$, there exists a partition $\gamma=(\gamma_1, \ldots, \gamma_r)$ in
  $\Gamma(n,r)$ such that $S_\bmy$ has a non-empty intersection with the
  hyperplanes defined by $x_{\gamma_{i-1}+1} = \cdots = x_{\gamma_i}$ for
  $1\leq i \leq r$. In other words, there exists $\gamma \in \Gamma(n, r)$ such
  that $\bmy\in \Pi(S_\gamma(F, G))$. We deduce that
  $\Pi(S(F, G)) \subset \cup_{\gamma \in \Gamma(n, r)}\Pi(S_\gamma(F, G))$. The
  reverse inclusion is immediate once we observe that
  $ \cup_{\gamma \in \Gamma(n, r)}S_\gamma(F, G)\subset S(F, G)$.
\end{proof}

\section{Experimental results}\label{sec:experiments}

Our experiments make use of the following software
\begin{myitemize}
\item {\sc RAGlib.} \cite{raglib}. This is a Maple library, based on the {\sc
    FGb} library by J.-C. Faug\`ere
  . It implements algorithms based on
  the critical point method running in time singly exponential in $n$.
\item {\sc Mathematica-CAD} \cite{Str06}, {\sc RealTriangularize}
  \cite{CDLMXX12} which are packages computing Cylindrical Algebraic
  Decompositions (CAD) adapted to polynomial sequences.
\end{myitemize}
We have considered the following test-suites:
\begin{myitemize}
\item {\bf S1}. We take the gradient of randomly chosen dense
  symmetric polynomials for $G$, letting $F$ be the empty sequence;
  the coefficients of these polynomials are chosen between $-2^{16}$
  and $2^{16}$ using the random tool generator of {\sc Maple}.
\item {\bf S2}. We take random dense systems of symmetric polynomials and
  equivariant families in $\mathbb{Q}[x_1, \ldots, x_n]$.
\end{myitemize}

To solve these polynomial systems, we will use the following
implementations of critical point method-based algorithms and
CAD:
\begin{myitemize}
\item {\bf RAG} refers to the Real Algebraic Geometry library {\sc
    RAGlib}; 
\item {\bf M} refers to the CAD  in {\sc Mathematica}; 
\item {\bf T} refers to the CAD package in {\sc Maple}.
\end{myitemize}
The direct use of these polynomials will be compared with algorithms on which
Theorems~\ref{thm:complexity1} and~\ref{thm:algo}~rely. These consist
in using critical point based algorithms to decide if the input system has a
real solution with at most $2d-1$ distinct coordinates (where $d$ bounds the
degree of the inputs). Also we can substitute the use of those algorithms
by ones that are based on CAD. This leads us to consider in our comparisons the
following:
\begin{myitemize}
\item {\bf RS}: consists in using {\sc RAGlib}; this is an
  implementation of the algorithm on which
  Theorems~\ref{thm:complexity1} and~\ref{thm:algo} rely.
\item {\bf RS-T}: consists in using the CAD Maple package to look at
  solutions with at most $2d-1$ distinct coordinates.
\end{myitemize}
The computations are performed on an Intel(R) Xeon(R) CPU E3-1505M v6
@ 3.00GHz with $32$ Gb of RAM. Timings are given in seconds. The
symbol '-' means that no result was obtained after $2$ days of
computation or because of a lack of memory.  {\tiny\begin{table}[t!]  \centering {
  \begin{tabular}{|c|c|c||ccccc|}
    \hline
    n & d & k & {\bf RS}  & {\bf RS-T} & {\bf RAG}  & {\bf M} & {\bf T}\\
    3 & 2 & 0 & 1.6  & 8 & 1.6  & 16 & 6.6\\
    4 & 2 & 0 & 1.9  & 10 & 13  & - & - \\
    5 & 2 & 0 & 4.9  & 9 & 329  & - & - \\
    6 & 2 & 0 & 5  & 25 & 1577  & - & - \\
    7 & 2 & 0 & 6  & 1 & 39461  & - & - \\
    8 & 2 & 0 & 10  & 10 & -  & - & - \\
    9 & 2 & 0 & 10  & 13 & -  & - & - \\
    \hline
  \end{tabular}}
  \caption{Results obtained for test-suite {\bf S1}}
  \label{tab:S1}
  \begin{tabular}{|c|c|c||ccccc|}
    \hline
    n & d & k & {\bf RS} & {\bf RS-T} & {\bf RAG}  & {\bf M} & {\bf T}\\
    5 & 3 & 3 & 1762   & - & 1779  & - & - \\
    6 & 3 & 3 & 1583   & - & 376822  & - & -\\
    7 & 3 & 3 & 3135   & - & -  & - & -\\
    8 & 3 & 3 & 4344   & - & -  & - & -\\
    5 & 3 & 4 & 0.4   & - & 0.4  & - & -\\
    6 & 3 & 4 & 0.4   & - & 21  & - &  - \\
    7 & 3 & 4 & 0.6   & - & 440  & - & - \\
    8 & 3 & 4 & 0.9   & - & 11686  & - & - \\
    \hline
  \end{tabular}
  \caption{Results obtained for test-suite {\bf S2}}
  \label{tab:S2}
\end{table}
}Tables~\ref{tab:S1} and \ref{tab:S2} provide the results obtained for the
test-suites {\bf S1} and {\bf S2}. One can see that the use of
Theorem~\ref{thm:main1} allows us to tackle examples that are out of reach of
other implementations.

We also observe that when $2d-1$ becomes larger than $4$ or $5$
implementations based on CAD cannot tackle most of the examples considered
here while the critical point method based implementation {\sc RAGlib}
scales far much better.

\added{Finally, let us consider the following examples extracted from \cite{IMOcomp}:
  \begin{itemize}
  \item the problem {\bf SWE} \cite[p. 98]{IMOcomp} consists in proving that for
    $0<a<b$, the semi-algebraic set defined by
    $m_2>m_1^2\frac{(a+b)^2}{4ab}, (b-x_1)(x_1-a)\geq 0, \ldots,
    (b-x_n)(x_n-a)\geq 0$ is empty with
{\tiny\[
m_1=\frac{1}{n}\sum_{i=1}^n x_i, \quad  m_2=\frac{1}{n}\sum_{i=1}^n x_i^2, \quad a=1, \quad b=2
\]}
\item the problem 3.40.1 (referred to as {\bf ROM}) in \cite[p. 302]{IMOcomp} leads to decide the existence
  of real roots to the semi-algebraic system:
{\tiny\[
\sum_{i<j}x_i (x_i^2+x_j^2)-\frac{1}{8}\left (\sum_{i=1}^nx_i\right )^4>0, x_1>0, \ldots, x_n>0
\]}
  \end{itemize}
  The first (resp. second) table provides timings for {\bf SWE} (resp. {\bf
    ROM}). As observed previously, implementations based on the critical point
  methods scale way better than those based on CAD. Also our approach combined
  with critical point methods allows us to tackle problems which are out of
  reach of the state-of-the-art. {\tiny
  \begin{center}
    \begin{tabular}{c}
    \begin{tabular}{|c|ccccccc|}
      \hline
         n     &3 & 4 &5 &6 & 7 & 8 & 9 \\
    {\bf RS} & 0.12 & 0.14 & 0.3 & 0.5 & 0.6 & 0.74 & 1.2 \\ 
    {\bf RAG} &0.2 & 0.3 & 0.5 &1.6 & 9.8 & 131 & 1978 \\ 
    {\bf M} &0.2 & 14.7 & - & - & - & - & - \\ 
      \hline
    \end{tabular} \\ 
      \\
    \begin{tabular}{|c|ccccccc|}
      \hline
         n     &3 & 4 &5 &6 & 7 & 8 & 9 \\
      \hline
    {\bf RS}   & 0.25 & 0.8 & 4.2 & 95 & 6874 & 6902 & 14023 \\ 
    {\bf RAG} & 0.3 & 1 & 4.5 & 97 & 6664 & - & - \\ 
    {\bf M} & 0.04 & 0.5 & 1246 & - & - & - & - \\ 
      \hline
    \end{tabular}      
    \end{tabular}
  \end{center}}}

\thanks{Cordian Riener is supported by the  Troms\o~ Research Foundation grant 17\_matte\_CR. 
  Mohab Safey El Din is supported by the ANR
  grant ANR-17-CE40-0009 {\sc Galop} and the PGMO grant {\sc Gamma}.}

\bibliographystyle{plain}
\bibliography{symsas}

\end{document}